\documentclass[sigconf,nonacm]{acmart}

\usepackage{booktabs}
\usepackage{balance}
\usepackage{graphicx}

\graphicspath{{./}{figures/}{paper/figures/}}

\renewcommand\footnotetextcopyrightpermission[1]{}
\setcopyright{none}
\acmDOI{}
\acmISBN{}

\AtBeginDocument{%
  \theoremstyle{acmdefinition}%
  \newtheorem{observation}[theorem]{Observation}%
  \newtheorem{remark}[theorem]{Remark}%
}

\newcommand{\Z}{\mathbb{Z}}
\newcommand{\E}{\mathbb{E}}
\newcommand{\diam}{\operatorname{diam}}
\newcommand{\dist}{\operatorname{d}}
\newcommand{\lcs}{\operatorname{lcs}}

\begin{document}

\title{Shiftfly: Scaling the Accelerator Interconnect Past the Pod with a
       Shift-Routed Optical Tier}
\titlenote{Preprint, version 1.0, 2 August 2026. Not peer reviewed. Simulator,
proofs and reproduction artefacts: \url{https://github.com/EylonKrause/shiftfly}
(commit tagged \texttt{v1.0}). This work uses no proprietary information; every
figure describing a shipped system is drawn from public announcements, and
quantities inferred rather than published are marked as inferred.}

\author{Eylon E. Krause}
% acmart deliberately discards \postcode ("ACM no longer collects authors'
% postal addresses") and warns when one is given, so the codes are carried in
% \city instead -- visible, and no warning.
\affiliation{%
  \institution{Weizmann Institute of Science}
  \city{Rehovot (7610001)}
  \country{Israel}}
\affiliation{%
  \institution{College of Management Academic Studies}
  \city{Rishon LeZion (7579806)}
  \country{Israel}}
\email{eylonkr@colman.ac.il}

\begin{abstract}
Google's TPU interconnect spent nine generations as a Cayley graph of an abelian
group---a $k$-ary $n$-cube whose diameter grows as $\Theta(N^{1/n})$---before
TPU~8i replaced it with \emph{Boardfly}, a three-tier hierarchy in which every
tier is a complete graph and the pod diameter is~$7$ over $1{,}152$ chips.
Boardfly is well matched to a single inference pod. It does not extend: its
global tier is a complete graph on groups, so reaching $G$ groups costs $G-1$
optical ports, and a $400{,}000$-chip machine would need $12{,}499$ per group
against the $40$ available. Another hierarchy level must therefore be stacked,
and each level costs four chip hops.

We propose \textbf{Shiftfly}, which retains Boardfly's building block and group
verbatim and replaces only the global tier with a generalized Kautz digraph,
installed as a fixed permutation on the optical circuit switch that the fabric
already owns. At an identical budget of $40$ optical ports per group, Shiftfly
is a flat graph of guaranteed diameter $\lceil\log_d G\rceil$, routes without
tables by a shift register, and addresses content natively.

Our evaluation is deliberately two-sided. Shiftfly \emph{loses} at one-pod
scale---Boardfly achieves chip-level diameter $7$ where Shiftfly needs $11$---
and wins beyond it, reducing worst-case distance from $23$ to $19$ hops at
$400{,}000$ chips with roughly $2.7\times$ better spectral expansion at equal
optical cost. The redundancy argument that motivated the design does not
survive its own evaluation: locality-aware \emph{placement} accounts for
$21.5\%$ and $14.5\%$ of the achievable saving on shared content on the two
topologies respectively, leaving the shift algebra a residual of $2.9\%$. We
report the metric inversion that conceals this, and conclude that the defensible
case for Shiftfly is scaling and expansion rather than deduplication.

Because Shiftfly is proposed as a drop-in replacement for one tier of an
existing machine, we also measure whether it can be operated. Replacing a failed
group costs the same $40$ optical circuits in both designs; deriving the global
wiring costs $28$ bits of control-plane state against $550{,}000$. Slice
allocation is the one regression: an arbitrary induced subset of a shift graph
is disconnected, so slices must be instantiated rather than carved, at a cost of
one switch reconfiguration per allocation.
\end{abstract}

\begin{CCSXML}
<ccs2012>
<concept><concept_id>10003033.10003034.10003035</concept_id>
<concept_desc>Networks~Network design principles</concept_desc>
<concept_significance>500</concept_significance></concept>
<concept><concept_id>10003033.10003034.10003039</concept_id>
<concept_desc>Networks~Data center networks</concept_desc>
<concept_significance>500</concept_significance></concept>
<concept><concept_id>10010520.10010521.10010528</concept_id>
<concept_desc>Computer systems organization~Interconnection
architectures</concept_desc>
<concept_significance>300</concept_significance></concept>
</ccs2012>
\end{CCSXML}

\ccsdesc[500]{Networks~Network design principles}
\ccsdesc[500]{Networks~Data center networks}
\ccsdesc[300]{Computer systems organization~Interconnection architectures}

\keywords{interconnection networks, network topology, degree--diameter problem,
Kautz digraphs, optical circuit switching, machine-learning accelerators}

\maketitle

%======================================================================
\section{Introduction}

An accelerator interconnect is a graph, and the quantities that decide whether a
machine is usable---worst-case latency, average latency, throughput under
adversarial traffic, behaviour under failure---are graph invariants: diameter,
mean distance, bisection width, connectivity. This paper takes that view
literally, applies it to every interconnect Google has shipped for its Tensor
Processing Units, and proposes a change to the most recent one.

For nine generations the TPU interconnect was a torus. That was a defensible
choice while the dominant traffic was nearest-neighbour: dense training stencils
and ring all-reduce touch only adjacent chips, so the $\Theta(N^{1/3})$ diameter
of a three-dimensional torus never reaches the critical path. Mixture-of-experts
routing~\cite{shazeer2017moe} broke the assumption, because all-to-all traffic
pays the diameter on every token. TPU~8i responded with \emph{Boardfly}, a
Dragonfly-derived~\cite{kim2008dragonfly} hierarchy of complete graphs whose pod
diameter is $7$ irrespective of where two chips sit~\cite{google2026tpu8}.

Boardfly is a good answer to the question it was asked. We observe that it does
not answer the next one. A complete graph on $G$ vertices requires degree
$G-1$; a group has $40$ optical ports; a $400{,}000$-chip machine has $12{,}500$
groups. The mechanism that gives Boardfly its diameter is precisely the
mechanism that prevents it from scaling, and the standard remedy---stack another
hierarchy level---costs four chip hops per level.

\paragraph{Contributions.}
\begin{itemize}
\item We catalogue every shipped TPU interconnect as a graph and audit each
against the Moore bound (\S\ref{sec:moore}). At TPU-class radix ($\Delta
\approx 7$) a diameter-$7$ graph could in principle hold $391{,}910$ chips;
Boardfly spends diameter $7$ on $1{,}152$.
\item We give \textbf{Shiftfly} (\S\ref{sec:design}), which replaces only
Boardfly's global tier with a generalized Kautz digraph on the existing optical
circuit switch, at an unchanged port budget. We prove its diameter, its
table-free routing rule, and a criterion under which requests for the same
content merge inside the fabric (\S\ref{sec:analysis}).
\item We identify a port-accounting error that would silently double Shiftfly's
hardware if the comparison were done naively, and we prove the correct
accounting (\S\ref{sec:ports}).
\item We evaluate both designs at matched optical cost with a dependency-free
simulator whose Boardfly model reproduces the published diameter of~$7$
(\S\ref{sec:eval}). Shiftfly loses at pod scale and wins beyond it.
\item We report a \textbf{negative result} (\S\ref{sec:sharing}): the redundancy
argument that motivated the design is dominated by locality-aware placement,
which is topology-agnostic. We also identify a metric inversion that would have
concealed this had we reported the natural efficiency ratio alone.
\item We measure whether the design can be \textbf{operated}
(\S\ref{sec:ops})---field replacement, control-plane state, slice allocation and
growth---rather than assuming it. Field replacement ties; slice allocation is a
genuine regression that we quantify and bound.
\end{itemize}

All code, proofs and artefacts are public~\cite{krause2026shiftfly}. No
proprietary information is used anywhere in this work; every TPU figure is drawn
from public announcements, and quantities we infer rather than read are marked
as inferred.

%======================================================================
\section{Background: TPU interconnects as graphs}
\label{sec:background}

\subsection{The torus era}

\begin{definition}[$k$-ary $n$-cube]
The torus $T(k_1,\dots,k_n)$, with every $k_i \ge 2$, is the
Cayley graph of
\[
  \Z_{k_1}\times\cdots\times\Z_{k_n}
  \quad\text{with generators}\quad
  \{\pm e_1,\dots,\pm e_n\}.
\]
\end{definition}

\begin{proposition}
\label{prop:torus}
$T(k_1,\dots,k_n)$ has $N=\prod_i k_i$ vertices, is $2n$-regular when every
$k_i>2$, and $\diam T = \sum_{i}\lfloor k_i/2\rfloor$. A cubic torus therefore
has $\diam T = \Theta(N^{1/n})$.
\end{proposition}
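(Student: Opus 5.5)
The plan is to read all three claims off the product structure of the group, since $T(k_1,\dots,k_n)$ is the Cayley graph of $\Z_{k_1}\times\cdots\times\Z_{k_n}$ with generators $\{\pm e_1,\dots,\pm e_n\}$.

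\emph{Vertex count.} The vertex set is the group itself, so $N=\prod_i k_i$ with nothing to prove.

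\emph{Regularity.} In a Cayley graph every vertex $x$ has neighbours $x+s$ for $s$ in the generating set. The degree is therefore the number of \emph{distinct} generators. We have $e_i\neq -e_i$ exactly when $2e_i\neq 0$ in $\Z_{k_i}$, which holds when $k_i>2$. Generators in different coordinates are also distinct, so when every $k_i>2$ the set $\{\pm e_1,\dots,\pm e_n\}$ has $2n$ elements and the graph is $2n$-regular. (When $k_i=2$ the pair collapses and the degree drops, which is why the hypothesis is needed.)

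\emph{Diameter.} By vertex-transitivity it suffices to compute $\max_x \dist(0,x)$. The key step is the identity
\[
\dist(0,x)=\sum_i \min(x_i,\,k_i-x_i), \qquad x_i\in\{0,\dots,k_i-1\}.
\]

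\textbf{Upper bound.} Route coordinate by coordinate. In coordinate $i$, take $x_i$ steps of $+e_i$ or $k_i-x_i$ steps of $-e_i$, whichever is fewer.

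\textbf{Lower bound.} Take any walk $0\to x$ and let $a_i,b_i$ be its numbers of $+e_i$ and $-e_i$ steps. Then $a_i-b_i\equiv x_i \pmod{k_i}$. This forces $a_i+b_i\ge |a_i-b_i|\ge$ the cyclic distance $\min(x_i,k_i-x_i)$, because every integer congruent to $x_i$ modulo $k_i$ has absolute value at least that distance. Summing over $i$ gives the bound.

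This identity is the only step needing care. The subtle point is that moves in different coordinates commute, so no interleaving of steps can beat the per-coordinate count; the congruence argument handles this cleanly.

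Maximizing coordinatewise, $\max_{0\le t<k}\min(t,k-t)=\lfloor k/2\rfloor$, attained at $t=\lfloor k/2\rfloor$. Hence $\diam T=\sum_i\lfloor k_i/2\rfloor$.

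\emph{Asymptotics.} For the cubic torus $k_i=k$ we get $N=k^n$ and $\diam T=n\lfloor k/2\rfloor$. For fixed $n$ and $k\ge 2$,
\[
\tfrac{n}{3}\,k\;\le\; n\lfloor k/2\rfloor \;\le\; \tfrac{n}{2}\,k,
\]
and $k=N^{1/n}$, so $\diam T=\Theta(N^{1/n})$. The constants depend on $n$, which is treated as fixed.
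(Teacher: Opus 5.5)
Your proof is correct and takes the same approach as the paper. The paper's one-line argument is that distance in this abelian Cayley graph is the sum of per-coordinate cyclic distances, each maximised at $\lfloor k_i/2\rfloor$, with the maxima simultaneously attainable; you fill in what it leaves implicit, namely the congruence argument for the lower bound on $\dist(0,x)$, the distinctness of the $2n$ generators when every $k_i>2$, and the explicit constants $\tfrac{n}{3}k \le n\lfloor k/2\rfloor \le \tfrac{n}{2}k$ behind the $\Theta(N^{1/n})$ claim.
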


\begin{proof}
In a Cayley graph of an abelian group with this generating set, distance is the
sum of per-coordinate cyclic distances; the cyclic distance in $\Z_k$ is
maximised at $\lfloor k/2\rfloor$, and the maxima are simultaneously
attainable.
\end{proof}

Table~\ref{tab:generations} lists the generations. Pod shapes for v5p, v7 and
v8t are inferred from published chip counts and marked accordingly; the
resulting diameters follow from Proposition~\ref{prop:torus}.

\begin{table}[t]
\centering\small
\begin{tabular}{lrrr}
\toprule
Generation & Chips & $\Delta$ & $D$ \\
\midrule
v2 / v3 (2D torus) & 256 / 1{,}024 & 4 & 16 / 32 \\
v4 (3D torus, OCS between cubes) & 4{,}096 & 6 & 24 \\
v5e / v6e (2D torus) & 256 & 4 & 16 \\
v5p (3D torus)$^\dagger$ & 8{,}960 & 6 & ${\approx}32$ \\
v7 Ironwood (3D torus)$^\dagger$ & 9{,}216 & 6 & ${\approx}32$ \\
v8t (3D torus + Virgo)$^\dagger$ & 9{,}600 & 6 & ${\approx}33$ \\
\textbf{v8i Boardfly} & \textbf{1{,}152} & \textbf{7} & \textbf{7} \\
\bottomrule
\end{tabular}
\caption{TPU interconnects as graphs. $^\dagger$Pod shape inferred from
published chip counts.}
\label{tab:generations}
\end{table}

\subsection{Boardfly}

Google describes Boardfly as ``a building block of four fully connected chips
into a fully connected group of eight boards, with 36 of such groups fully
connected into a TPU~8i pod''~\cite{google2026eighthgen}. A building block holds
$4$ chips; a group holds $8$ building blocks, i.e.\ $32$ chips; a pod holds $36$
groups, i.e.\ $1{,}152$ chips. Each building block exposes $16$ external links
of which $11$ serve the group in copper, leaving $5$ per building block and
hence $40$ optical ports per group. Within a pod the $36$ groups are fully
connected through the Apollo optical circuit switch~\cite{poutievski2022jupiter},
consuming $35$ of those $40$.

\begin{lemma}[Diameter decomposition]
\label{lem:seven}
If the optical link between two groups terminates on one designated chip of one
designated building block on each side, the chip-level pod diameter is $7$.
\end{lemma}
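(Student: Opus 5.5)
We prove the bound in both directions: every pair of chips is joined by a path of at most $7$ hops, and some pair is at distance exactly $7$. The three tiers are complete graphs. Inside a building block, $4$ chips form $K_4$. Inside a group, $8$ building blocks are pairwise joined by copper links, and we treat each inter-block link as terminating on a specific chip of each block. Across the pod, $36$ groups are pairwise joined by one optical link, which by hypothesis terminates on a designated chip of a designated building block on each side. Each tier then contributes one traversal hop, and access hops can be charged to the intra-block $K_4$.

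For the upper bound, we take chips $u$ in group $A$ and $v$ in group $B\neq A$. Let $g_A$ be the designated terminal chip of the $A$--$B$ optical link, lying in block $b_A$; define $g_B$ and $b_B$ symmetrically. The route has three segments, and we bound each:
\begin{itemize}
\item From $u$ to $g_A$ takes at most $3$ hops: one hop inside $u$'s block to the chip carrying the copper link toward $b_A$, one copper hop into $b_A$, and one hop inside $b_A$'s $K_4$ to $g_A$.
\item The optical link $g_A g_B$ is $1$ hop.
\item From $g_B$ to $v$ takes at most $3$ hops, by the symmetric argument.
\end{itemize}
This totals $3+1+3=7$. When $A=B$, the same local argument gives at most $3$ hops. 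We also note that any shortcut through a third group $C$ would use two optical hops plus local segments, so it cannot beat the direct route in the worst case.

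For the lower bound, we choose $u$ in a block $b\neq b_A$ of $A$ such that $u$ is not the chip carrying $b$'s copper link to $b_A$. We also require that $b$'s copper link to $b_A$ does not land on $g_A$. We choose $v$ symmetrically in $B$. We then argue that every $u$--$v$ path must leave $A$ by an optical link, and consider two cases:
\begin{itemize}
\item \emph{Direct $A$--$B$ link.} The path must reach $g_A$, which takes $3$ hops because $u\neq g_A$, $u$ is in a different block, and the copper link does not land on $g_A$. Symmetrically, it takes $3$ hops on the $B$ side.
\item \emph{Route via a third group $C$.} The path uses at least $2$ optical hops and still pays at least $1$ access hop on each side, plus the transit inside $C$.
\end{itemize}
Hence $\dist(u,v)=7$ whenever such a configuration exists.

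The main obstacle is the lower-bound case analysis, specifically the indirect routes through $C$. Showing they are no shorter requires that the $A$--$C$ and $C$--$B$ optical terminals do not give $u$ and $v$ cheap access. If terminals are spread so that every chip is within one hop of some optical port, indirect paths could drop to $1+1+\text{transit}_C+1+1$. I would first try to resolve this with a counting argument. There are $35$ optical links per group over $32$ chips, and $8$ blocks each have $7$ copper links with $4$ chips. From these counts I would exhibit a specific $(u,v)$ pair for which every indirect path pays at least $2$ hops inside $C$ (its two terminals lying in different blocks) together with nontrivial access on each side. If that pair cannot be forced, I would fall back to stating the lemma as the upper bound $7$, which is attained under the designated-terminal layout, and verify attainment by the simulator's BFS mentioned in the paper.
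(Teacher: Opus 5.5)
Your upper bound is the paper's proof: the same three-segment decomposition gives $3+1+3$. The segments are a block hop, a copper hop and a block hop to the egress chip, then the optical hop, then the symmetric side. For attainment the paper offers only the sentence ``the bound is attained''. Its sole support is the BFS measurement of diameter exactly $7$ on its concrete port model (\S\ref{sec:validation}). So your fallback is exactly what the paper establishes: prove $\diam\le 7$ and confirm equality on the actual layout by BFS.

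The lower-bound case analysis you sketch would not close as written, for two reasons.

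\emph{Direct case.} Your conditions on $u$ do not force $\dist(u,g_A)=3$. Suppose $u$'s copper partner $z$, in a third block $b''$, has its own copper link to $g_A$. Then $u\to z\to g_A$ has length $2$. Also, with $11$ copper links per block and only $7$ neighbouring blocks, some block pairs carry two links, so ``the'' copper link from $b$ to $b_A$ is not well defined.

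\emph{Indirect case.} ``At least one access hop on each side'' is unjustified. A block has five optical ports on four chips, and nothing stops $u$ and $v$ from terminating optical links themselves. If both terminate links to a common third group $C$, the route $u\to C\to v$ has length at most $1+3+1=5$, whatever the $A$--$B$ terminals are. Even granting one access hop per side, the indirect cost is $4+t$, where $t\le 3$ is the transit inside $C$. You would then need $t=3$ for every $C$, and nothing forces that. This also makes your aside in the upper-bound part false in general: a detour through $C$ can beat the direct route. That aside was not needed there anyway.

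The underlying issue is that the hypothesis pins down only the terminals of the $A$--$B$ link. The indirect routes depend on where the other $34$ optical terminals of $A$ and $B$, and the terminal pairs inside each $C$, sit. The port counts you propose to use do not determine that. A distance-$7$ pair therefore has to be certified from the specific wiring. The defensible reading of the lemma is $\diam\le 7$ from the designated-terminal hypothesis, with equality a checked fact about the modelled layout.
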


\begin{proof}
Reaching a group's egress chip costs at most three hops: one inside the source
building block, one across copper to the building block owning the optical port,
one inside that block. The optical hop is one, and the destination side is
symmetric, giving $3+1+3=7$; the bound is attained.
\end{proof}

We adopt this accounting throughout. Traversing $g$ inter-group hops passes
through $g+1$ groups, so
\begin{equation}
\label{eq:chiphops}
\text{chip hops} = \iota\,(g+1)+g, \qquad \iota=3,
\end{equation}
which reproduces $7$ at $g=1$. Our simulated chip-level pod measures diameter
exactly $7$, which is the check that the port model is faithful
(\S\ref{sec:validation}).

%======================================================================
\section{Motivation}
\label{sec:moore}

\subsection{The Moore-bound audit}

\begin{theorem}[Moore bound~\cite{hoffman1960,miller2013degdiam}]
A graph of maximum degree $\Delta\ge3$ and diameter $D$ satisfies
$N \le 1+\Delta\sum_{i=0}^{D-1}(\Delta-1)^i$.
\end{theorem}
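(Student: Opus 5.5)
We prove the Moore bound by a breadth-first layering argument from an arbitrary root vertex. Fix any vertex $v$ of the graph and partition the vertex set by distance from $v$: let $L_i=\{u : \dist(v,u)=i\}$ for $i=0,1,\dots,D$. Because the diameter is $D$, every vertex lies in some $L_i$ with $i\le D$, so $N=\sum_{i=0}^{D}|L_i|$. The whole proof then reduces to bounding each layer size $|L_i|$ and summing.

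The layer bounds come from counting edges between consecutive layers. Clearly $|L_0|=1$, and $|L_1|\le\Delta$ since $L_1$ is the neighbourhood of $v$. For $i\ge1$, every vertex $w\in L_{i+1}$ has at least one neighbour in $L_i$, namely the predecessor on a shortest $v$--$w$ path. Conversely, every $u\in L_i$ with $i\ge1$ has at least one neighbour in $L_{i-1}$, by the same reasoning, and so it spends at least one of its at most $\Delta$ edges pointing backwards. Hence $u$ has at most $\Delta-1$ neighbours in $L_{i+1}$. Assigning each $w\in L_{i+1}$ to a chosen predecessor gives a map $L_{i+1}\to L_i$ whose fibres have size at most $\Delta-1$, so $|L_{i+1}|\le(\Delta-1)|L_i|$. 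Induction from $|L_1|\le\Delta$ then gives $|L_i|\le\Delta(\Delta-1)^{i-1}$ for $1\le i\le D$.

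Summing, we obtain
\[
  N \le 1+\sum_{i=1}^{D}\Delta(\Delta-1)^{i-1}
    = 1+\Delta\sum_{i=0}^{D-1}(\Delta-1)^i,
\]
which is the stated bound. The hypothesis $\Delta\ge3$ plays no role in the inequality itself. It only matters for the closed form $1+\Delta\frac{(\Delta-1)^D-1}{\Delta-2}$ and for excluding the degenerate cycle and path cases, so we note this rather than use it.

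The one point needing care is the claim that each vertex in $L_i$, for $i\ge1$, has a neighbour in $L_{i-1}$, which is what removes one edge from the forward count. This follows directly from the definition of graph distance: truncating a shortest path to $u$ yields a shortest path to its penultimate vertex. The root is the only vertex without such a backward edge, which is why the first layer gets factor $\Delta$ rather than $\Delta-1$.
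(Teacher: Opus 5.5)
Your proof is correct and takes the same approach as the paper. Both use breadth-first layering from an arbitrary vertex: each non-root vertex spends one edge on its parent, so it has at most $\Delta-1$ children, which gives $|L_i|\le\Delta(\Delta-1)^{i-1}$, and summing over $i\le D$ yields the bound. Your explicit parent map $L_{i+1}\to L_i$ with fibres of size at most $\Delta-1$ simply makes precise the paper's phrase about ``unused incident edges,'' and you are right that the hypothesis $\Delta\ge3$ is not needed for the inequality itself.
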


\begin{proof}
Breadth-first search from any vertex reaches at most $\Delta$ vertices at
distance $1$ and at most $\Delta(\Delta-1)^{i-1}$ at distance $i$, since each
newly reached vertex has at most $\Delta-1$ unused incident edges; every vertex
lies within $D$.
\end{proof}

Equality demands a Moore graph, and these exist only for $D=1$, for cycles, and
for $D=2$ with $\Delta\in\{3,7,57\}$~\cite{hoffman1960}. The bound is a ceiling,
so the informative quantity is the ratio to it. At $\Delta=7$:
\[
\begin{aligned}
\text{Moore}(7,4) &= 1{,}814, &\quad \text{Moore}(7,6) &= 65{,}318,\\
\text{Moore}(7,5) &= 10{,}886, &\quad \text{Moore}(7,7) &= 391{,}910.
\end{aligned}
\]

\begin{observation}
Boardfly attains diameter $7$ at $1{,}152$ chips, where $\Delta=7$ permits
diameter $4$ at that order---a factor of ${\approx}1.75$. More strikingly,
$\Delta=7$ admits diameter $7$ at almost exactly $400{,}000$ chips. Boardfly
spends, at pod scale, the diameter budget that would in principle serve a
machine two orders of magnitude larger.
\end{observation}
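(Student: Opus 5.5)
The plan is a direct computation from the Moore bound stated above, read in its contrapositive form. For fixed $\Delta$, the bound $N \le M(\Delta,D) := 1+\Delta\sum_{i=0}^{D-1}(\Delta-1)^i$ is strictly increasing in $D$. So the smallest diameter not ruled out at order $N$ is the least $D$ with $M(\Delta,D)\ge N$. I will first put the sum in closed form,
\[
M(\Delta,D)=1+\frac{\Delta\bigl((\Delta-1)^D-1\bigr)}{\Delta-2},
\]
which at $\Delta=7$ becomes $M(7,D)=1+\tfrac{7}{5}(6^D-1)$. I take $\Delta=7$ from Table~\ref{tab:generations}. Boardfly's diameter $7$ at $1{,}152$ chips is Lemma~\ref{lem:seven}.

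The first step establishes the ``diameter $4$ at that order'' claim. I will compute $M(7,3)=1+7\cdot 43=302<1{,}152$, so no graph of maximum degree $7$ on $1{,}152$ vertices has diameter $\le 3$. I will then compute $M(7,4)=1{,}814\ge 1{,}152$, so the Moore bound does not exclude diameter $4$. Hence $4$ is the Moore-bound floor, and the ratio of Boardfly's diameter to it is $7/4=1.75$. This matches the ${\approx}1.75$ in the statement.

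The second step is the ``almost exactly $400{,}000$'' claim. Here $M(7,7)=1+\tfrac75(279{,}936-1)=391{,}910$, which is within about $2\%$ of $400{,}000$. Taking contrapositives the same way, $M(7,6)=65{,}318$ shows that diameter $7$ is the least diameter the bound permits anywhere near that order. Comparing $391{,}910$ with $1{,}152$ gives a factor of roughly $340$, which justifies ``two orders of magnitude''.

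I expect the main obstacle to be interpretive rather than computational. The Moore bound is only a necessary condition, so ``permits'' and ``in principle'' must be read as ``is not excluded by the bound'', not as a claim that such a graph exists. I will state this explicitly, citing the remark after the Moore-bound theorem that equality holds only for the listed Moore graphs. A second caveat concerns the degree hypothesis. The theorem needs $\Delta\ge3$ to be the \emph{maximum} degree, so I must check that $7$ bounds every chip's degree in the port model of Lemma~\ref{lem:seven}, not merely the typical one. If some chips carry an extra copper or optical link, the audit should be rerun with the true maximum. That would only raise the Moore numbers and so strengthen the conclusion.
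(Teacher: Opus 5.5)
Your proposal is correct and follows the paper's route: evaluate the Moore bound at $\Delta=7$ (the paper tabulates $1{,}814$, $10{,}886$, $65{,}318$, $391{,}910$ for $D=4,\dots,7$) and read off the least non-excluded diameter. Your check $M(7,3)=302<1{,}152$ makes explicit that $4$ is the floor, and the measured maximum degree $7$ in \S\ref{sec:validation} settles your degree caveat. One wording fix: since $400{,}000>391{,}910$, diameter $7$ is actually excluded at exactly $400{,}000$ chips, so say that the diameter-$7$ ceiling is $391{,}910\approx 400{,}000$, not that $7$ is the least diameter the bound permits near that order.
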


We are careful not to overclaim. Moore-optimal graphs of large diameter are not
known, and the bound ignores bisection, cable length and fault tolerance---all
of which Boardfly buys with its slack. The observation motivates inspection; it
is not by itself an argument.

\subsection{The structural obstruction}

The argument is this. Boardfly's global tier is the complete graph $K_{36}$ on
groups, so reaching $G$ groups costs $G-1$ optical ports per group. At
$400{,}000$ chips, i.e.\ $G=12{,}500$ groups of $32$, this is $12{,}499$ ports
against $40$ available. Full connectivity does not scale, another hierarchy
level must be introduced, and by~\eqref{eq:chiphops} every added level costs
four chip hops. The property that gives Boardfly its diameter is the property
that bounds its size.

%======================================================================
\section{Design}
\label{sec:design}

\begin{figure*}[t]
\centering
\includegraphics[width=0.88\textwidth]{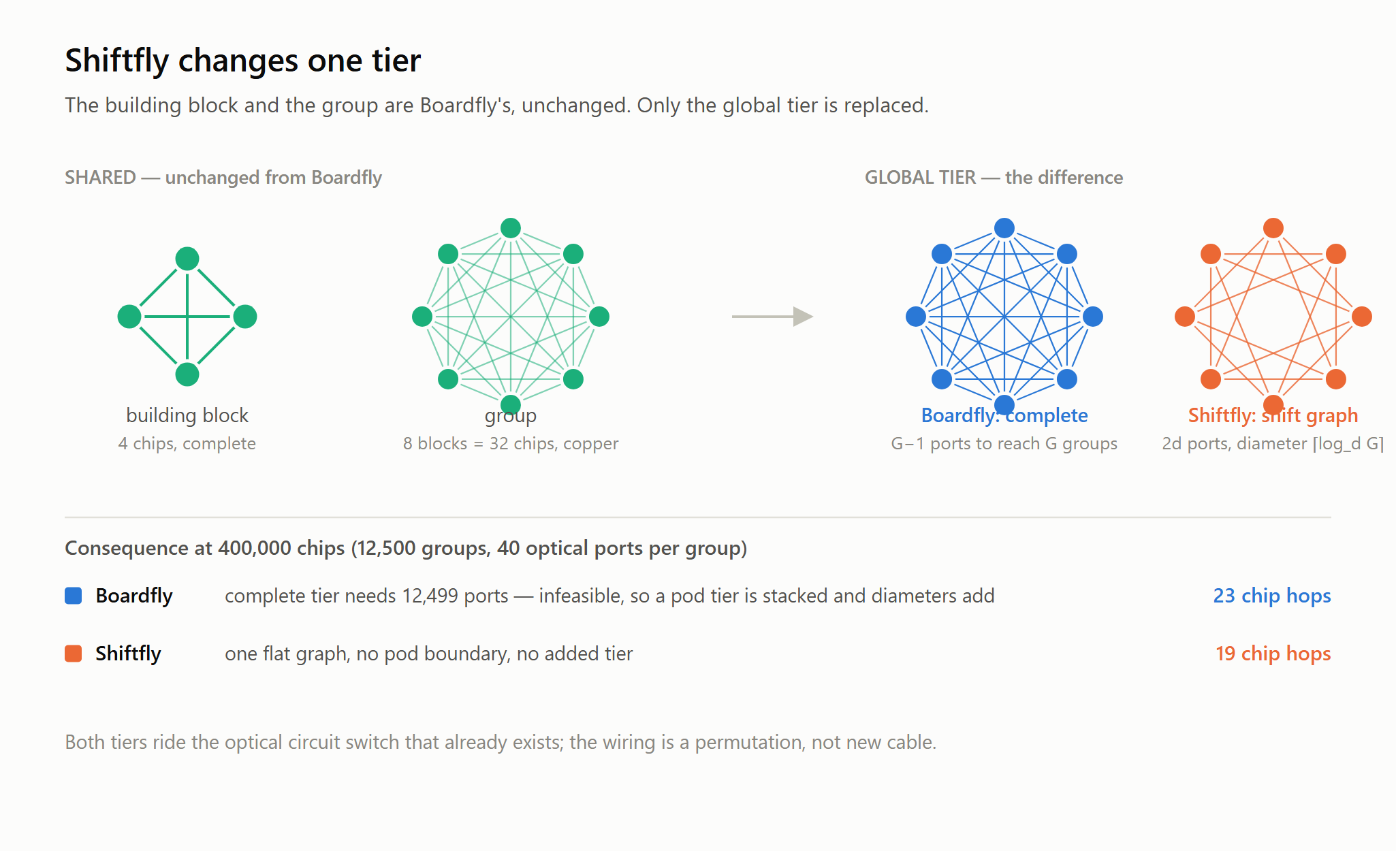}
\Description{Four panels. The left two, labelled shared and unchanged, show a
building block of four fully connected chips and a group of eight such blocks.
The right two contrast the global tier: Boardfly's complete graph, needing G
minus one ports to reach G groups, against Shiftfly's sparser shift graph
needing 2d ports at diameter log-base-d of G. A footer states the consequence
at 400,000 chips: 23 chip hops for Boardfly against 19 for Shiftfly.}
\caption{Shiftfly changes exactly one tier. The building block and the group are
Boardfly's, unchanged; only the global tier is replaced, and it rides the
optical circuit switch that already exists.}
\label{fig:hierarchy}
\end{figure*}

\begin{definition}[Kautz digraph~\cite{kautz1968}]
For $d\ge2$, $D\ge1$, the vertices of $K(d,D)$ are words $u_1\cdots u_D$ over an
alphabet of size $d+1$ with $u_i\ne u_{i+1}$, and there is an arc
$u \to u_2\cdots u_D\alpha$ for each $\alpha\ne u_D$.
\end{definition}

\begin{definition}[Shiftfly]
Retain Boardfly's building block ($4$ chips, complete) and group ($8$ building
blocks, complete; $32$ chips). Replace the complete global tier by a Kautz
digraph on the groups, realised as a fixed permutation installed on the optical
circuit switch. Where the group count is not of the form $(d+1)d^{D-1}$, use the
Imase--Itoh digraph~\cite{imase1981,imase1983} on $\Z_n$ with arcs
$i\mapsto -di-r \bmod n$, $r=1,\dots,d$, which exists at every order.
\end{definition}

Figure~\ref{fig:hierarchy} shows what changes. Nothing below the global tier is
touched: the copper, the packaging and the $32$-chip group are Boardfly's. This
is deliberate---those tiers are cheap, short-reach and well designed, and the
obstruction of \S\ref{sec:moore} lives entirely above them.

\subsection{Port accounting}
\label{sec:ports}

The comparison is worthless unless both designs are held to the same optical
budget, and the natural reading of ``out-degree $d$'' understates Shiftfly's
cost by a factor of two.

\begin{proposition}
\label{prop:ports}
In $K(d,D)$ with $D\ge2$, a vertex is both a successor and a predecessor of
another exactly when $u_1=u_D$, and then in exactly one neighbour. There are
$d(d+1)$ such vertices out of $(d+1)d^{D-1}$. Undirected degree is therefore
$2d$, or $2d-1$ on that thin set.
\end{proposition}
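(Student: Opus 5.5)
The plan is to write down what it means for a vertex $u=u_1\cdots u_D$ to be both a successor and a predecessor of a single vertex $v$, and to solve the resulting word equations directly from the definition of $K(d,D)$. I will then read off the characterisation, the uniqueness of $v$, the count, and the undirected degree in that order.

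\textbf{Step 1: basic degree facts.} First I record that every vertex has $d$ distinct out-neighbours and $d$ distinct in-neighbours. The out-neighbours are $u_2\cdots u_D\alpha$ with $\alpha\ne u_D$; distinct $\alpha$ give distinct words. The in-neighbours are $\beta u_1\cdots u_{D-1}$ with $\beta\ne u_1$, and the same argument applies. There are no loops: a loop would force $u_i=u_{i+1}$ for all $i$, which the alphabet condition forbids. So the undirected degree is $2d$ minus the number of vertices that appear in both lists.

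\textbf{Step 2: the word equations.} Suppose $v$ is both a successor and a predecessor of $u$. Then $v=u_2\cdots u_D\alpha$, and the arc $v\to u$ gives $u=v_2\cdots v_D\beta=u_3\cdots u_D\,\alpha\beta$. Comparing positions yields $u_i=u_{i+2}$ for $1\le i\le D-2$, together with $\alpha=u_{D-1}$ and $\beta=u_D$. Hence $u$ is the two-letter word $abab\cdots$ with $a=u_1\ne b=u_2$. This closure condition ties the last letter back to the first: for $D$ odd it reads exactly $u_1=u_D$, which is the form of the criterion in the statement.

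For the converse I check directly that for such a word the choice $\alpha=u_{D-1}$ is admissible, since $u_{D-1}\ne u_D$. The resulting $v=u_2\cdots u_Du_{D-1}$ is again alternating, so $v\to u$ is an arc. Uniqueness of the mutual neighbour follows because $\alpha$ was forced to equal $u_{D-1}$.

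\textbf{Step 3: count and degree.} The admissible words are determined by the ordered pair $(a,b)$ with $a\ne b$, giving $(d+1)d=d(d+1)$ vertices out of $(d+1)d^{D-1}$. By Step 1, each of these has undirected degree $2d-1$, and every other vertex has undirected degree $2d$.

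\textbf{Main obstacle.} The delicate point is Step 2: matching the derived periodicity condition with the criterion exactly as the statement phrases it. For $D$ odd, period two is equivalent to the alternating pattern closing with $u_1=u_D$. For $D$ even the same alternating words close with $u_1=u_{D-1}$ and $u_2=u_D$ instead. I will state the characterisation as the alternating condition and note that it coincides with $u_1=u_D$ in the odd case. I would not assert the bare endpoint condition for all $D$, because for $D\ge4$ a word such as $abca$ has $u_1=u_D$ but no mutual neighbour. The count and the $2d$ versus $2d-1$ degree claims hold for every $D\ge2$.
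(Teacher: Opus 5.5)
Your Step~2 follows the same coordinate-matching route as the paper, and your version of it is the accurate one. The last coordinate forces $\alpha=u_{D-1}$ and the middle ones force $u_i=u_{i+2}$ for $1\le i\le D-2$. So the vertices with a mutual neighbour are exactly the period-two words $abab\cdots$, the mutual neighbour is unique, there are $d(d+1)$ such vertices, and the undirected degree is $2d-1$ on them and $2d$ elsewhere, for every $D\ge2$. The paper's proof instead records $\alpha=u_2$, which holds only for odd $D$. It also says periodicity plus the no-repeat constraint ``reduces to $u_1=u_D$'' and counts such words as $d(d+1)$; both steps are exact only at $D=3$, where periodicity is the single equation $u_1=u_3$. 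You are right that the endpoint criterion fails for even $D$. Your $abca$ is a valid counterexample, period-two words of even length have $u_1\ne u_D$, and at $D=2$ every vertex has a mutual neighbour although $u_1=u_D$ never holds.

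The error is your claim that for odd $D$ the period-two condition ``reads exactly $u_1=u_D$'', which you plan to note in the write-up. For odd $D$ the endpoint condition is necessary but not sufficient once $D\ge5$. Periodicity is $D-2$ equations, and $u_1=u_D$ is only one consequence of them. Concretely, $u=abcba$ is a vertex of $K(2,5)$ with $u_1=u_5$ but $u_1\ne u_3$. Its out-neighbours $bcbab$, $bcbac$ and its in-neighbours $babcb$, $cabcb$ are disjoint, so its undirected degree is $2d$, not $2d-1$. Your own remark that the bare endpoint condition fails ``for $D\ge4$'' already contradicts the odd-case claim at $D=5$. A count confirms this: words with $u_1=u_D$ are closed walks of length $D-1$ in $K_{d+1}$, of which there are $d^{D-1}+(-1)^{D-1}d$. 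At $D=5$ that is $d^4+d\ne d(d+1)$. So drop the odd-case remark. State the characterisation as period two for all $D\ge2$, and say it agrees with the statement's ``$u_1=u_D$'' exactly when $D=3$. The count $d(d+1)$ and the $2d$ versus $2d-1$ degree claims, which are all the port-accounting corollary uses, are unaffected.
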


\begin{proof}
$v$ is a successor and predecessor of $u$ iff $u_2\cdots u_D\alpha =
\beta u_1\cdots u_{D-1}$ for admissible $\alpha,\beta$. Matching coordinates
forces $\beta=u_2$, $u_i = u_{i+2}$ for all valid $i$, and $\alpha=u_2$; with
the no-repeat constraint this reduces to $u_1=u_D$ and determines $\alpha$
uniquely. Counting words with $u_1=u_D$ gives $d(d+1)$.
\end{proof}

\begin{corollary}
Out-degree $d$ costs $2d$ bidirectional ports. Matching Boardfly's $40$ optical
ports per group means $d=20$, not $d=40$.
\end{corollary}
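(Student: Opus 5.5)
The corollary follows from Proposition~\ref{prop:ports} once we fix how a bidirectional optical port relates to an arc of the digraph. I will use the paper's physical model. An optical circuit is bidirectional. A vertex $u$ therefore needs one port for each distinct \emph{undirected} neighbour, that is, for each vertex that is a successor or a predecessor of $u$. Arcs $u\to v$ and $v\to u$ can share a single circuit only when both arcs exist.

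The first step counts neighbours. In $K(d,D)$ every vertex has out-degree $d$, one successor for each $\alpha\ne u_D$, and in-degree $d$, one predecessor for each $\beta\ne u_1$. I first check that these are $d$ distinct vertices on each side, which holds because the appended or prepended letter differs. The number of distinct undirected neighbours is then $2d$ minus the overlap between successors and predecessors. Proposition~\ref{prop:ports} says this overlap is empty unless $u_1=u_D$, and then has exactly one element. Self-loops cannot arise, since $u_2\cdots u_D\alpha=u$ would force all letters equal, which violates the no-repeat rule. So each vertex needs $2d$ ports, except for the $d(d+1)$ vertices with $u_1=u_D$, which need $2d-1$.

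The second step takes the worst case. The fabric is provisioned uniformly per group, and for $D\ge2$ that thin set is a strict minority of the $(d+1)d^{D-1}$ vertices. Out-degree $d$ therefore costs $2d$ bidirectional ports. For the Imase--Itoh fallback I will make the same argument with the bound $2d$ as an upper bound: there are $d$ out-arcs and at most $d$ in-arcs, so the port count is at most $2d$. This still yields the budget constraint.

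Finally, setting $2d\le 40$ gives $d\le 20$, so $d=20$ is the largest out-degree that fits. Taking $d=40$, the naive reading, would need $80$ ports and violate the budget. The one step needing care is the modelling convention that one bidirectional circuit can carry at most one arc in each direction. This rules out amortising a directed arc as half a port. I will state it explicitly, since it is exactly the convention the ``port-accounting error'' of \S\ref{sec:ports} turns on.
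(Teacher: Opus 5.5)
Your proposal is correct and follows the paper's own argument: the corollary is read straight off Proposition~\ref{prop:ports} (undirected degree $2d$, or $2d-1$ where some successor is also a predecessor), after which $2d\le 40$ forces $d=20$, and your checks on distinctness, loops and the Imase--Itoh bound are sound elaborations. One slip, partly inherited from that proposition: the overlap is nonempty exactly when $u=abab\cdots$ alternates between two letters, which agrees with the condition $u_1=u_D$ only for $D=3$, so at $D=2$ the $d(d+1)$ such vertices are all of $K(d,2)$ (each $ab\leftrightarrow ba$) rather than a strict minority, and the cost there is $2d-1$; the conclusion survives, because you only use that the overlap has at most one element and $2d-1\le 40$ still gives $d\le 20$.
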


Every result below uses $d=20$, and our artefact asserts that the two designs'
edge counts agree to within $5\%$ at each scale.

\subsection{Routing}

\begin{figure*}[t]
\centering
\includegraphics[width=0.88\textwidth]{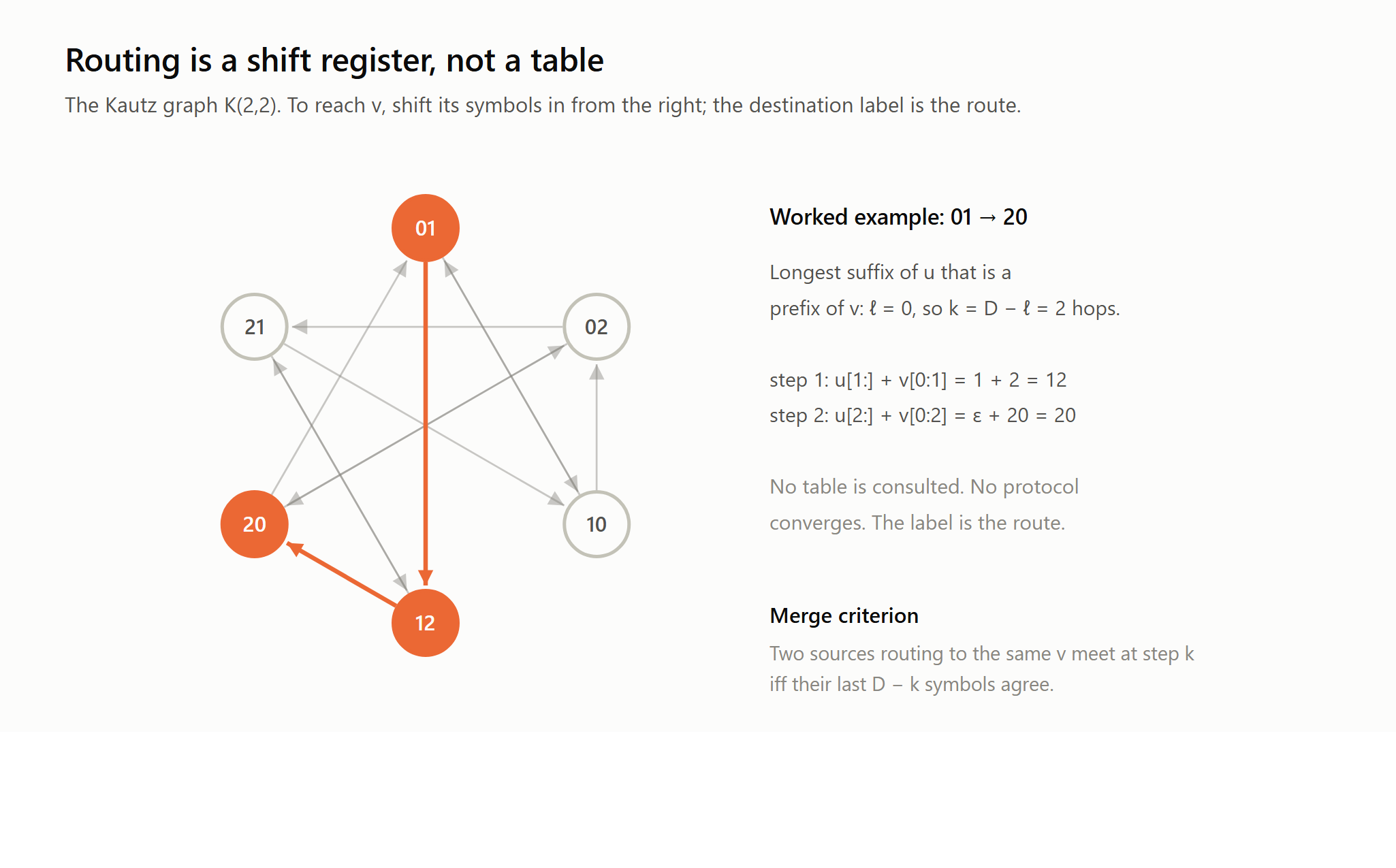}
\Description{The six vertices of the Kautz graph K(2,2), labelled 01, 02, 10,
12, 20 and 21, arranged in a ring with directed arcs between them. The route
from 01 to 12 to 20 is highlighted. Alongside, the arithmetic: the longest
suffix of the source that prefixes the destination has length zero, so the
route takes two hops, appending the destination symbols one per step.}
\caption{Shift routing on $K(2,2)$, with the two-hop route $01\to12\to20$. The
destination label is the route: no table is consulted and no protocol
converges.}
\label{fig:routing}
\end{figure*}

Routing is a shift register (Figure~\ref{fig:routing}); the rule and its
optimality are Theorem~\ref{thm:shift}.

%======================================================================
\section{Analysis}
\label{sec:analysis}

\begin{theorem}[Order, degree, diameter~\cite{kautz1968}]
\label{thm:kautz}
$K(d,D)$ has $N=(d+1)d^{D-1}$ vertices, in- and out-degree $d$, and diameter
exactly $D$.
\end{theorem}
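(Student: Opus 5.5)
The proof splits into three parts — counting, degrees, and the diameter — each following directly from the word definition of $K(d,D)$. We take $D\ge1$ throughout; the case $D=1$ is the complete digraph on $d+1$ symbols and can be checked directly.

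\emph{Order.} A vertex is built letter by letter. There are $d+1$ choices for $u_1$, and each later letter $u_{i+1}$ may be any symbol other than $u_i$, giving $d$ choices. Hence $N=(d+1)d^{D-1}$.

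\emph{Degrees.} The out-neighbours of $u$ are the words $u_2\cdots u_D\alpha$ with $\alpha\ne u_D$. Each is a valid Kautz word, since the only new adjacency is $u_D\alpha$. These $d$ words are pairwise distinct because their last letters differ. For in-degree, $v$ is an out-neighbour of $w$ exactly when $w=\beta v_1\cdots v_{D-1}$, with the out-arc label $\alpha=v_D$. This requires $\beta\ne v_1$, which gives $d$ choices of $\beta$. The admissibility $v_D\ne v_{D-1}$ holds automatically, since $v$ is itself a Kautz word.

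\emph{Diameter $\le D$.} This is the heart of the argument. Fix a source $u$ and a target $v$. Let $s$ be the largest $s$ with $0\le s\le D$ such that the length-$s$ suffix of $u$ equals the length-$s$ prefix of $v$. This is the $\lcs$ quantity that Theorem~\ref{thm:shift} will use; here we only need $s\ge0$. Shift in, one per hop, the letters $v_{s+1},\dots,v_D$. After $j$ hops the current word is the suffix of $u$ of length $D-j$ concatenated with $v_{s+1}\cdots v_{s+j}$. Two checks are needed.

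First, each appended letter differs from the current last letter. At hop $j\ge2$ the current last letter is $v_{s+j-1}$ and the appended letter is $v_{s+j}$, which differ because $v$ is a Kautz word. At the first hop the current last letter is $u_D$ and the appended letter is $v_{s+1}$. If $s\ge1$, then $u_D=v_s\ne v_{s+1}$. If $s=0$, then $u_D$ might equal $v_1$. This is the one snag. It disappears on observing that $u_D=v_1$ means the suffix of length $1$ matches the prefix of length $1$, so $s\ge1$, contradicting $s=0$. So every hop is a legal arc.

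Second, after $D-s$ hops the word is $v_{1}\cdots v_s\, v_{s+1}\cdots v_D=v$. Hence $\dist(u,v)\le D-s\le D$.

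\emph{Diameter $\ge D$.} It suffices to exhibit $u,v$ with $s=0$ and $\dist(u,v)=D$. After $j<D$ hops, any walk from $u$ ends at a word whose first $D-j$ letters are $u_{j+1}\cdots u_D$. So reaching $v$ in $j$ hops forces the suffix of $u$ of length $D-j$ to equal the prefix of $v$ of that length. This gives the general lower bound $\dist(u,v)\ge D-s$, so the shift route is in fact optimal.

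Now choose $u=(ab)^{\ast}$, the alternating word over symbols $a,b$ of length $D$, and $v=cc'\cdots$ with $v_1=c\notin\{a,b\}$, which is possible since the alphabet has $d+1\ge3$ symbols. Every prefix of $v$ begins with $c$, while every suffix of $u$ begins with $a$ or $b$. Hence no nonempty suffix of $u$ equals a prefix of $v$, so $s=0$ and $\dist(u,v)=D$.

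The only delicate point is the first-hop admissibility when $s=0$. It is resolved by the maximality of $s$, as shown above.
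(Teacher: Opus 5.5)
Your proof is correct. It uses the same two ingredients as the paper's proof, the letter-by-letter count and the shift-register walk, but your upper bound is organised differently, and the difference matters. The paper bounds $\dist(u,v)\le D$ by applying the arc map $D$ times to append $v_1,\dots,v_D$. Read literally, that is not a walk when $u_D=v_1$, because the first appended letter repeats $u_D$. For example, in $K(2,2)$ appending $1$ to $01$ gives the non-word $11$; indeed $12$ is not reachable from $01$ in exactly two steps, although $\dist(01,12)=1$. Your route of length $D-s$ through the maximal overlap $s$ is exactly the repair. Maximality gives $u_D\ne v_1$ when $s=0$, and $u_D=v_s\ne v_{s+1}$ otherwise, so the first appended letter is always admissible.

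You also supply two things the paper leaves implicit:
\begin{itemize}
\item the in-degree count, which its proof never addresses;
\item the lower bound $\dist(u,v)\ge D-s$, which the paper borrows from the minimality step of Theorem~\ref{thm:shift} (stated afterwards), together with an explicit zero-overlap pair showing that diameter $D$ is actually attained.
\end{itemize}
The paper's version buys brevity and a clean split between the size--diameter statement and the routing theorem. Yours is self-contained and in effect proves $\dist(u,v)=D-s$, i.e.\ Theorem~\ref{thm:shift}, along the way.

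One terminology slip: your $s$ is the overlap $\ell$ of Theorem~\ref{thm:shift}, not $\lcs$. The paper reserves $\lcs$ for the longest common suffix of two sources in Theorem~\ref{thm:merge}.
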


\begin{proof}
There are $d+1$ choices for $u_1$ and $d$ for each later symbol, giving the
order; out-degree is the number of $\alpha\ne u_D$. For the diameter, applying
the arc map $D$ times replaces every symbol of $u$ by one of $v$, so
$\dist(u,v)\le D$, and $D$ is attained by any pair sharing no admissible
overlap.
\end{proof}

Against the directed Moore bound $N\le(d^{D+1}-1)/(d-1)$, $K(d,D)$ is within a
factor $1+O(1/d)$: it is essentially optimal among digraphs, which is the reason
to prefer it to de~Bruijn here.

\begin{theorem}[Shift routing]
\label{thm:shift}
Let $\ell$ be the length of the longest suffix of $u$ that is a prefix of $v$,
and $k=D-\ell$. Then
\[
  w^{(j)} \;=\; u_{j+1}\cdots u_D\, v_{\ell+1}\cdots v_{\ell+j},
  \qquad j=0,\dots,k,
\]
is a walk from $u$ to $v$ of length $k$, and $k=\dist(u,v)$.
\end{theorem}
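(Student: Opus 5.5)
The plan is to prove the two halves of the statement separately: first that the displayed sequence is a genuine walk of length $k$ ending at $v$, and then that no shorter walk exists. Both halves rest on one observation, which is just the arc rule of the Kautz definition iterated. Along any walk of length $m$ out of $u$, the $j$-th vertex is $u_{j+1}\cdots u_D\,\alpha_1\cdots\alpha_j$ for the appended symbols $\alpha_i$ (for $j\le D$). So after $m<D$ steps the surviving length-$(D-m)$ suffix of $u$ is exactly the prefix of the current vertex.

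\emph{Upper bound (the walk).} First I check that each $w^{(j)}$ is a vertex of $K(d,D)$, i.e.\ a word of length $D$ with no two equal adjacent symbols. The length is $(D-j)+j$, and $j\le k=D-\ell$ guarantees that the $v$-indices $\ell+1,\dots,\ell+j$ stay in range. Adjacent symbols inside the $u$-block and inside the $v$-block differ because $u$ and $v$ are themselves admissible. So only the junction $u_D\,v_{\ell+1}$ needs an argument, and this is the one delicate point.
\begin{itemize}
\item If $\ell\ge1$, then $u_D=v_\ell$ because the overlap has length $\ell$, and $v_\ell\ne v_{\ell+1}$.
\item If $\ell=0$, then $u_D\ne v_1$, since otherwise the length-one suffix $u_D$ would be a prefix of $v$, contradicting the maximality of $\ell$.
\end{itemize}
Next, $w^{(j)}\to w^{(j+1)}$ is an arc: it drops the first symbol and appends $\alpha=v_{\ell+j+1}$, and the same junction argument gives $\alpha\ne$ the last symbol of $w^{(j)}$. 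Finally, $w^{(0)}=u$, and $w^{(k)}=u_{k+1}\cdots u_D\,v_{\ell+1}\cdots v_D$. Its first block is the overlap suffix $v_1\cdots v_\ell$, so $w^{(k)}=v$. This gives $\dist(u,v)\le k$.

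\emph{Lower bound (optimality).} Suppose some walk from $u$ to $v$ has length $m<k$. Then $m<D$, so by the iterated-arc observation $v=u_{m+1}\cdots u_D\,\alpha_1\cdots\alpha_m$. Hence the suffix of $u$ of length $D-m$ is a prefix of $v$. But $D-m>D-k=\ell$, contradicting the maximality of $\ell$. Therefore $\dist(u,v)=k$.

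The degenerate case $u=v$ gives $\ell=D$ and $k=0$, where both halves are trivial. The bound $k\le D$ recovers the diameter upper bound of Theorem~\ref{thm:kautz}.
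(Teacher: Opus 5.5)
Your proof is correct and follows the paper's argument. The walk is built by the same drop-one, append-one construction, the leading $\ell$-symbol block of $w^{(k)}$ is identified with $v_1\cdots v_\ell$, and minimality comes from the same maximality-of-overlap argument. Your separate treatment of the case $\ell=0$ at the junction $u_D\,v_{\ell+1}$ fills a small gap in the paper, whose justification $u_D=v_\ell\ne v_{\ell+1}$ only makes sense when $\ell\ge1$.
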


\begin{proof}
$w^{(0)}=u$, and $w^{(j+1)}$ is obtained from $w^{(j)}$ by deleting the leading
symbol and appending $v_{\ell+j+1}$, which is the arc map. At $j=k$ we have
$w^{(k)} = u_{k+1}\cdots u_D\,v_{\ell+1}\cdots v_D$, and $u_{k+1}\cdots u_D$ has
length $\ell$ and equals $v_1\cdots v_\ell$ by definition of $\ell$; hence
$w^{(k)}=v$. Each $w^{(j)}$ is a legal vertex because $u_D=v_\ell \ne
v_{\ell+1}$. Minimality follows because reaching $v$ in $k'$ steps requires the
length-$(D-k')$ suffix of $u$ to equal the corresponding prefix of $v$, and
$\ell$ is the largest such overlap.
\end{proof}

Theorem~\ref{thm:shift} is the operational point: the destination label is the
route. There is no forwarding table to size, no routing protocol to converge and
no state to reconcile after a reconfiguration, at any scale.

\begin{theorem}[Merge criterion]
\label{thm:merge}
Let $u,u'$ route to a common $v$ under Theorem~\ref{thm:shift} with equal
overlap. Their walks occupy the same vertex at step $j$ if and only if $u$ and
$u'$ agree in their last $D-j$ symbols. The first merge is therefore at
$j^\ast = D-\lcs(u,u')$, where $\lcs$ is the longest common suffix.
\end{theorem}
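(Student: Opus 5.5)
The plan is to compare the two walks of Theorem~\ref{thm:shift} coordinate by coordinate. Because $u$ and $u'$ have the same overlap $\ell$ with $v$, both walks have the same length $k=D-\ell$. Their intermediate vertices are
\[
  w^{(j)} = u_{j+1}\cdots u_D\,v_{\ell+1}\cdots v_{\ell+j},
  \qquad
  w'^{(j)} = u'_{j+1}\cdots u'_D\,v_{\ell+1}\cdots v_{\ell+j},
\]
for the same range $j=0,\dots,k$. Both are words of length $D$, and they share the trailing block $v_{\ell+1}\cdots v_{\ell+j}$ of length $j$. Two words are equal exactly when they agree position by position. The last $j$ positions agree automatically, so $w^{(j)}=w'^{(j)}$ if and only if the leading blocks agree. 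Those blocks are $u_{j+1}\cdots u_D$ and $u'_{j+1}\cdots u'_D$, which are the last $D-j$ symbols of $u$ and of $u'$. This gives the ``if and only if'' directly.

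For the first merge, I will note that agreement in the last $D-j$ symbols is equivalent to $D-j\le\lcs(u,u')$. This condition is monotone in $j$: once the walks meet they coincide at every later step. It also holds precisely for $j\ge D-\lcs(u,u')$, so the least such $j$ is $j^\ast=D-\lcs(u,u')$.

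The one point needing care is that $j^\ast$ actually lies in the range $0\le j^\ast\le k$ on which the walks are defined. Otherwise the formula would name a step after both walks have ended. The upper bound comes from the definition of $\ell$: the length-$\ell$ suffixes of $u$ and $u'$ both equal $v_1\cdots v_\ell$. Hence $\lcs(u,u')\ge\ell$, and therefore $j^\ast\le D-\ell=k$. The lower bound is trivial since $\lcs(u,u')\le D$. At $j=k$ the criterion is satisfied, consistent with both walks ending at $v$. The degenerate case $u=u'$ gives $j^\ast=0$.

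I expect no real obstacle. The only subtlety is the equal-overlap hypothesis, which is what aligns the step indices; I will point out explicitly where it is used. Without it, the walks would append different segments of $v$ at the same step, and the comparison would have to be made at shifted indices.
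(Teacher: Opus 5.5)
Your proof is correct and follows the paper's argument. The paper likewise observes that $w^{(j)}$ depends on the source only through $u_{j+1}\cdots u_D$, while the appended block is a function of $v$ (and the common $\ell$) alone, so equality at step $j$ reduces to equality of those suffixes; your extra checks (monotonicity in $j$, and $\lcs(u,u')\ge\ell$ giving $j^\ast\le k$) are details the paper leaves implicit but which make the ``first merge'' formula well posed.
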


\begin{proof}
By Theorem~\ref{thm:shift}, $w^{(j)}$ depends on the source only through
$u_{j+1}\cdots u_D$; the appended symbols are a function of $v$ alone. Equality
of $w^{(j)}$ and $w'^{(j)}$ is thus equivalent to equality of those suffixes.
\end{proof}

Theorem~\ref{thm:merge} says that redundant requests collapse
\emph{structurally}: once two requests for the same item meet, the second may be
served from an in-network cache at the meeting point, with no directory, no
coordination and no protocol. It is a statement about the routing algebra, and
it is the property that motivated this design.

\begin{proposition}[Random labels give almost no merging]
\label{prop:random}
Under a uniformly random labelling,
$\Pr[\lcs(u,u')\ge j]=d^{-j}(1+o(1))$, whence
$\E[\lcs]=\tfrac{1}{d-1}(1+o(1))$ and $\E[j^\ast]=D-\tfrac{1}{d-1}(1+o(1))$.
\end{proposition}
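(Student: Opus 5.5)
Model $u,u'$ as two independent uniformly random vertices of $K(d,D)$ and compute the tail of $\lcs(u,u')$ by reading the words from the right. The key structural fact is that the Kautz constraint is symmetric under reversal: a word with no two equal adjacent symbols, read backwards, is again such a word. Reading from the right, the last symbol $u_D$ is therefore uniform over the $d+1$ letters, and each earlier symbol $u_{i}$ is uniform over the $d$ letters different from $u_{i+1}$. This follows from the count $(d+1)d^{D-1}$ in Theorem~\ref{thm:kautz}: every admissible right-to-left choice sequence is equally likely.

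Conditioned on $u$, I will compute $\Pr[\lcs\ge j]=\Pr[u'_{D-j+1}\cdots u'_D = u_{D-j+1}\cdots u_D]$ for $1\le j\le D$ exactly. In the right-to-left description the last symbol matches with probability $1/(d+1)$. Given that the suffixes agree from position $i+1$ on, $u'_i$ ranges over the $d$ letters different from $u'_{i+1}=u_{i+1}$, and the target $u_i$ is among them. So each further symbol matches with probability exactly $1/d$. This gives
\[
  \Pr[\lcs\ge j]=\frac{1}{d+1}\,d^{-(j-1)}=d^{-j}\,\frac{d}{d+1},
\]
independent of $u$, hence also unconditionally. I then read this as $d^{-j}(1+o(1))$ as $d\to\infty$, the regime of interest with $d=20$; the correction factor $d/(d+1)$ is $1-O(1/d)$. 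I should flag that the $o(1)$ is in $d$, not in $D$. If the statement intends asymptotics in $D$ only, the formula is exact up to the constant $d/(d+1)$, which is where the stated form is slightly loose.

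For the expectation I use the tail-sum formula $\E[\lcs]=\sum_{j=1}^{D}\Pr[\lcs\ge j]$. Summing the geometric series gives
\[
  \frac{d}{d+1}\cdot\frac{1-d^{-D}}{d-1}=\frac{1}{d-1}\,\frac{d}{d+1}\,(1-d^{-D}),
\]
which is $\frac{1}{d-1}(1+o(1))$. Then $\E[j^\ast]=D-\E[\lcs]$ follows by linearity, using the merge point $j^\ast=D-\lcs(u,u')$ from Theorem~\ref{thm:merge}.

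The main obstacle is a modelling caveat rather than the computation. Theorem~\ref{thm:merge} is stated for sources with equal overlap to $v$, so conditioning on equal overlap could in principle bias the distribution of $\lcs$. I would handle this in one of two ways. The first is to state the proposition for the unconditioned pair, as the statement literally does. The second is to note that for $\ell=0$, which is overwhelmingly likely under random labels, the conditioning changes the probabilities only by a factor $1+O(1/d)$, which is absorbed into the $o(1)$. I would also remark that sampling $u'$ with or without replacement (i.e.\ $u'\ne u$) changes the probabilities only by a factor $1+O(N^{-1})$.
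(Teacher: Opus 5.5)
The paper states this proposition without proof. Its only supporting text is the remark that $\E[\lcs]\approx 0.053$ at $d=20$, and the word ``whence'' signals the tail-sum step. So there is no paper argument to compare against: your proposal supplies the missing derivation, and it is correct.

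Your argument is also sharper than the statement. You read a uniform Kautz word right to left: the last symbol is uniform on $d+1$ letters, and each earlier symbol is uniform on the $d$ letters that differ from its right neighbour. From this you get exact formulas,
\[
\Pr[\lcs(u,u')\ge j]=\tfrac{d}{d+1}\,d^{-j}, \qquad \E[\lcs]=\tfrac{d\,(1-d^{-D})}{(d-1)(d+1)}.
\]
This confirms your reading that the $o(1)$ has to be taken as $d\to\infty$. It also shows that the paper's $0.053$ is only the leading term $1/19$. The exact value at $d=20$ is $\frac{20}{399}(1-20^{-D})\approx 0.050$.

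One side remark is wrong as stated. Restricting to $u'\ne u$ is not a multiplicative $1+O(N^{-1})$ perturbation.
\begin{itemize}
\item Since $u'=u$ forces $\lcs=D$, and $N^{-1}=\frac{d}{d+1}d^{-D}$, you get $\Pr[\lcs\ge j\mid u'\ne u]=\frac{d}{d+1}\cdot\frac{d^{-j}-d^{-D}}{1-N^{-1}}$.
\item Its ratio to the independent case is $(1-d^{j-D})/(1-N^{-1})$. That is only $1+O(1/d)$ for $j\le D-1$, and the probability is exactly $0$ at $j=D$.
\item So the correction is additive, of size $O(N^{-1})$, not multiplicative.
\item The expectation moves by at most $D/N$. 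This is $O(1/d)$ relative to $\E[\lcs]$ once $D\ge 2$, so your conclusions for $\E[\lcs]$ and $\E[j^\ast]$ survive.
\item However, under sampling without replacement, which is what a labelling of distinct groups actually gives, the stated tail formula fails at $j=D$.
\end{itemize}

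Your equal-overlap remark is asserted rather than shown. It does hold for $\ell=\ell'=0$, for three reasons:
\begin{itemize}
\item all admissible length-$j$ suffixes are equally likely;
\item at most a $\frac{1}{d-1}$ fraction of them force $\ell\ge 1$;
\item given any other suffix, the event $\ell=0$ has probability at least $1-\frac{1}{d-1}$.
\end{itemize}
Still, the statement concerns the unconditioned pair, so nothing in your proof rests on this remark.
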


At $d=20$ this is $\E[\lcs]\approx0.053$: essentially no merging before the
destination. \textbf{The merge property is not free.} It must be purchased by
choosing the labelling so that groups likely to request the same content are
suffix-close---an embedding of the observed sharing hypergraph into the suffix
tree. \S\ref{sec:sharing} measures what that is worth, and the answer is
sobering.

\begin{theorem}[Arbitrary order~\cite{imase1981,reddy1980}]
The Imase--Itoh digraph on $\Z_n$ has out-degree $d$ and diameter at most
$\lceil\log_d n\rceil$ for every $n\ge2$.
\end{theorem}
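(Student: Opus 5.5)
The plan is to compute the $k$-step reachable set of a vertex in closed form and then choose $k$ so that this set covers all of $\Z_n$. Out-degree is the easy part: the arcs from $i$ are $-di-r \bmod n$ for $r=1,\dots,d$. These are $d$ consecutive residues, so they are distinct whenever $n\ge d$; for $n<d$ we count arcs with multiplicity, which is the convention the Imase--Itoh construction uses. The substantive claim is the diameter bound.

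First, by induction on $k$, the vertices reachable from $i$ by walks of exactly $k$ arcs are the residues
\[
  (-d)^k i \;-\; \sum_{t=0}^{k-1} (-d)^{t} r_{t+1} \pmod n,
  \qquad r_1,\dots,r_k\in\{1,\dots,d\},
\]
where $r_1$ labels the last arc taken. The inductive step is that applying $x\mapsto -dx-r$ to this expression multiplies the existing sum by $-d$ and adds the fresh term $-r$. This step is purely mechanical.

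Second, I would show that the offsets $S_k=\sum_{t=0}^{k-1}(-d)^t r_{t+1}$ take exactly $d^k$ consecutive integer values. Substituting $r=s+1$ with $s\in\{0,\dots,d-1\}$ gives $S_k=c_k+\sum_t (-d)^t s_{t+1}$, where $c_k=\sum_{t<k}(-d)^t$ is a constant. The remaining sum is a base-$(-d)$ expansion with digits in $\{0,\dots,d-1\}$. Such expansions of length $k$ represent a set of $d^k$ consecutive integers, namely those between the sum of the negative-position maxima and the sum of the positive-position maxima. I expect this to be the main obstacle, because negabase representations are less familiar than ordinary positional ones. I would prove it by induction on $k$: if the length-$(k-1)$ expansions form an interval $I$ of length $d^{k-1}$, then the length-$k$ expansions are $\bigcup_{s=0}^{d-1}\bigl(s+(-d)\,I\bigr)$. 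Multiplying $I$ by $-d$ gives a set whose elements are spaced $d$ apart and which spans an interval of $d^k$ integers. Translating it by $s=0,\dots,d-1$ fills every residue class mod $d$, so the union is exactly an interval of $d^k$ consecutive integers. Injectivity comes for free from counting: there are $d^k$ digit strings and $d^k$ values.

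Third, take $k=\lceil\log_d n\rceil$, so that $d^k\ge n$. Any $d^k\ge n$ consecutive integers hit every residue mod $n$, so every vertex is reachable from $i$ by a walk of exactly $k$ arcs, and hence $\dist(i,j)\le k$. This gives diameter at most $\lceil\log_d n\rceil$ for every $n\ge2$. Using walks of exact length $k$ is harmless, since the claim is only an upper bound.
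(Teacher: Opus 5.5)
Your proposal is correct and follows the same route as the paper: write the $k$-fold iterate of the arc map as $(-d)^k i$ minus a digit sum in powers of $-d$, then conclude that the reachable set covers $\Z_n$ once $d^k\ge n$. The paper's sketch only asserts that the reachable set after $k$ steps has size $\min(d^k,n)$, and your negabase induction showing that the offsets form $d^k$ consecutive integers supplies the justification it omits. One small wording fix: $(-d)I$ itself spans only $d^k-d+1$ integers, and it is the $d$ translates by $s=0,\dots,d-1$ that together tile exactly $d^k$ consecutive integers.
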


\begin{proof}[Proof sketch]
Iterating the arc map $k$ times sends $i$ to $(-d)^k i-\sum_j r_j(-d)^{k-j}$.
As the $r_j$ range over $\{1,\dots,d\}$ the reachable set after $k$ steps has
size $\min(d^k,n)$, and covers $\Z_n$ once $d^k\ge n$.
\end{proof}

This removes a quantisation that would otherwise strand capacity: string Kautz
graphs exist only at orders $(d+1)d^{D-1}$, and a machine is whatever size it
is.

\begin{proposition}[Connectivity~\cite{imase1983}]
$K(d,D)$ has vertex connectivity $d$, the maximum possible at that degree.
\end{proposition}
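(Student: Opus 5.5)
The proof has two parts. The upper bound is immediate: vertex $u$ has out-degree $d$, so deleting its $d$ out-neighbours disconnects $u$ from every vertex outside $\{u\}\cup N^+(u)$. Such a vertex exists because $N=(d+1)d^{D-1}>d+1$ whenever $D\ge2$; for $D=1$ the digraph is the complete digraph $K^\ast_{d+1}$, whose connectivity is $d$ by convention. The substance is the lower bound. For any ordered pair $u\ne v$ with $v\notin N^+(u)$, I will exhibit $d$ internally vertex-disjoint directed $u\to v$ paths, and then Menger's theorem gives $\kappa\ge d$.

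\textbf{Constructing the paths.} For each admissible $\alpha\ne u_D$, the $\alpha$-path starts with the arc $u\to u_2\cdots u_D\alpha$. From there it reaches $v$ by shift routing as in Theorem~\ref{thm:shift}: append $v_1,\dots,v_D$ one symbol per step. When $v_1=\alpha$ this is shortcut to the shift-routing walk itself. Before appending $v_1$, a single padding symbol $\gamma_\alpha$ may be inserted, chosen to keep every intermediate word legal (no equal adjacent symbols). The intermediate vertices of the $\alpha$-path at step $j$ have the form
\[
  u_{j+1}\cdots u_D\,\alpha\,\gamma_\alpha\,v_1\cdots v_{j-2}.
\]
Two paths with $\alpha\ne\alpha'$ could share an internal vertex only at steps whose windows still contain the distinguishing symbol $\alpha$, or at positions where the windows are misaligned because the paths have different lengths. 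Each path has length at most $D+2$, so only finitely many offsets need checking.

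\textbf{Disjointness.} Here I follow Imase--Soneoka--Okada. Suppose the $\alpha$-path at step $j$ and the $\alpha'$-path at step $j'$ hit the same word $w$. Comparing the position where $\alpha$ sits in $w$ against the position where $\alpha'$ sits forces a periodicity of the form $u_i=u_{i+s}$, or $v_i=v_{i+s}$, or an overlap between a suffix of $u$ and a prefix of $v$. The case analysis has three cases:
\begin{itemize}
\item $j=j'$, which is impossible because the word contains $\alpha$ in one path and $\alpha'$ at the same position in the other;
\item $j\ne j'$ with no overlap between $u$ and $v$, which is ruled out using the padding choice $\gamma_\alpha\notin\{\alpha,v_1\}$, always available since the alphabet has $d+1\ge3$ letters;
\item $u$ and $v$ overlap with $\ell\ge1$.
\end{itemize}
In the third case the path whose first symbol extends the overlap (namely $\alpha=v_{\ell+1}$, when $\ell$ is the overlap from $u_2\cdots u_D$) is the short shift-routing path. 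The others are rerouted so that they avoid it.

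I expect the main obstacle to be this last case. Periodic words, where $u$ overlaps itself or $v$ with several shifts, can force two different $\alpha$-paths to collide. The fix I would try first is to choose $\gamma_\alpha$ adaptively, adding one extra detour symbol for at most one troublesome $\alpha$. The point is that a collision pins $\gamma_\alpha$ to at most one forbidden letter beyond $\{\alpha,v_1\}$, and at least one letter is still available provided $d\ge3$. The cases $d=2$, $D\ge2$ would then be checked directly; $K(2,2)$ is small enough to enumerate, and for larger $D$ I would argue by the line-digraph recursion $K(d,D)=L(K(d,D-1))$. That recursion gives an alternative route for every $d$: the line digraph of a $d$-regular digraph with arc-connectivity $d$ has vertex connectivity $d$. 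The remaining task is then to show that $K(d,1)=K^\ast_{d+1}$ has arc-connectivity $d$ and that line digraphs preserve arc-connectivity, which is a standard Menger-on-arcs argument. I would likely present this recursive route as the main proof and keep the explicit paths as illustration.
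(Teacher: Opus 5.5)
The paper gives no argument for this proposition; it only cites Imase et al. Your proposal is therefore necessarily a different route, and it is a correct one provided you make the line-digraph induction the proof. The explicit-path construction should be dropped or marked heuristic. Its key claim, that a collision forbids at most one further letter for $\gamma_\alpha$ and only for one troublesome $\alpha$, is asserted rather than derived, and $d=2$ is left open.

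The recursion is also shorter than you make it. A vertex set $F$ of $L(G)$ is an arc set of $G$, and $L(G)-F=L(G-F)$. Suppose $e_1=(a,b)$ cannot reach $e_2=(c,e)$ in $L(G-F)$. Then $b\neq c$, since otherwise the arc $e_1\to e_2$ survives. Moreover $b$ cannot reach $c$ in $G-F$, because any such walk extends to an $e_1\to e_2$ walk. Hence $\kappa(L(G))\ge\lambda(G)$. Whitney's $\kappa\le\lambda\le\delta$ then gives
\[
d\ge\lambda(K(d,D))\ge\kappa(K(d,D))\ge\lambda(K(d,D-1)).
\]
So arc-connectivity $d$ propagates up the recursion with no separate Menger-on-arcs lemma. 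The base case is that every cut of $K^{*}_{d+1}$ crosses $|S|\,(d+1-|S|)\ge d$ arcs.

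Compared with the bare citation, your route gives a self-contained proof that also yields $\lambda=d$. It isolates why Kautz beats de~Bruijn: $B(d,1)$ has loops, so its $\lambda$ is $d-1$ and the same chain gives $d-1$.

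It also exposes the limit of the statement. This matters because the paper offers maximal connectivity as the defence of a fabric it runs at arbitrary orders. The argument needs a loop-free complete base, so it covers only orders $(d+1)d^{D-1}$. At $n=12{,}500$ and $d=20$ we have $\gcd(21,n)=1$. So for each $r$, the vertex $i\equiv -r\cdot 21^{-1}\pmod{n}$ of the Imase--Itoh digraph carries a loop and has only $d-1$ other out-neighbours. That caps connectivity there at $d-1$.
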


Thin global tiers are the usual fragility of hierarchical fabrics; maximal
connectivity is the relevant defence.

%======================================================================
\section{Evaluation}
\label{sec:eval}

\subsection{Methodology}

Both designs receive $40$ optical ports per group and identical $32$-chip group
internals, so all comparisons are at matched optical cost
(Proposition~\ref{prop:ports}). Distances are computed exactly by breadth-first
search where the instance permits and by sampling from $120$--$256$ random
sources otherwise; sampled diameters are lower bounds and are labelled as
sampled. Spectral gaps are power-iteration estimates of $1-\lambda_2$ of the
normalised adjacency operator, reported as estimates. The simulator is
stdlib-only and regenerates every figure and table in this paper from one
script~\cite{krause2026shiftfly}.

Boardfly beyond one pod is modelled generously: the $5$ optical ports per group
left after intra-pod full connectivity are wired as a random graph over pods,
which at $347$ pods yields a low-diameter pod-level graph. We are not
handicapping the incumbent.

\subsection{Model validation}
\label{sec:validation}

Our chip-level Boardfly pod has $1{,}152$ vertices, maximum degree $7$, mean
degree $5.84$, and \textbf{measured diameter exactly $7$}, matching the
published figure and Lemma~\ref{lem:seven}. Every conclusion below rests on this
check.

\subsection{Scaling}

\begin{figure}[t]
\centering
\includegraphics[width=\columnwidth]{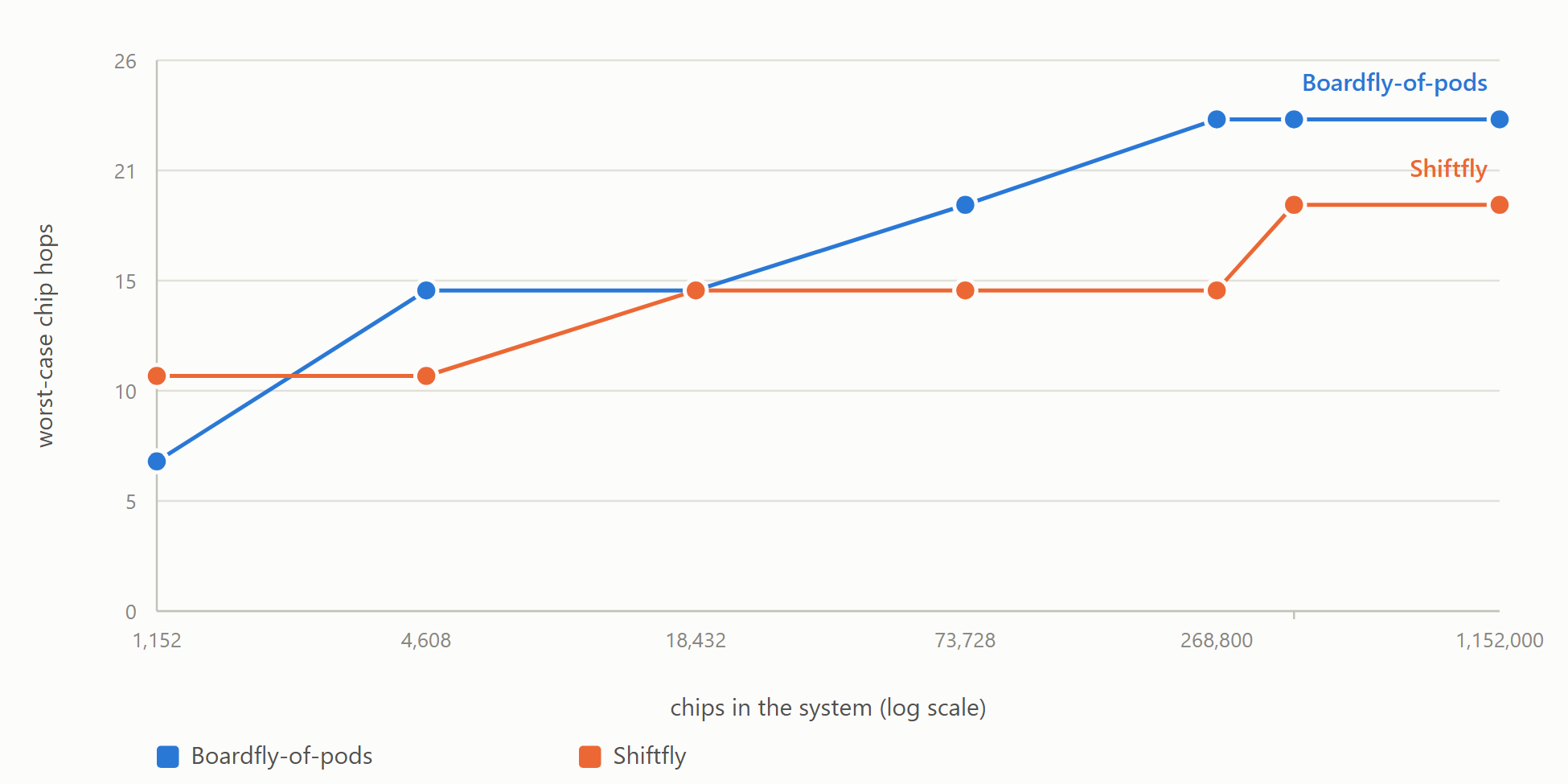}
\Description{Worst-case chip hops against system size on a logarithmic axis,
from 1,152 to 1,152,000 chips. Boardfly starts lower at 7 hops and rises in
steps to 23. Shiftfly starts at 11 and rises to 19. The two curves cross near
4,600 chips, after which Shiftfly is always at or below Boardfly.}
\caption{Chip-level worst-case distance at equal optical cost. Boardfly is
better at pod scale; the crossing is the result.}
\label{fig:diam}
\end{figure}

\begin{table}[t]
\centering\small
\begin{tabular}{rrr}
\toprule
Chips & Boardfly & Shiftfly \\
\midrule
1{,}152 & \textbf{7} & 11 \\
4{,}608 & 15 & \textbf{11} \\
18{,}432 & 15 & 15 \\
73{,}728 & 19 & \textbf{15} \\
268{,}800 & 23 & \textbf{15} \\
400{,}000 & 23 & \textbf{19} \\
1{,}152{,}000 & 23 & \textbf{19} \\
\bottomrule
\end{tabular}
\caption{Chip-level diameter at $40$ optical ports per group.}
\label{tab:scale}
\end{table}

Table~\ref{tab:scale} and Figure~\ref{fig:diam} give the result.
\textbf{Boardfly is strictly better at the scale it was designed for}: $7$
against $11$ chip hops at $1{,}152$ chips, because a complete global tier
crosses in one hop and a shift tier needs three. The curves cross near $4{,}600$
chips. At the $400{,}000$-chip target the reduction is $23\to19$, or $17\%$; the
widest gap in the sweep is $23\to15$ at $268{,}800$ chips, where Boardfly has
just paid for another hierarchy level and Shiftfly has not yet needed another
symbol. Both curves are step functions, so the advantage is not monotone.

\subsection{Expansion}

\begin{figure}[t]
\centering
\includegraphics[width=\columnwidth]{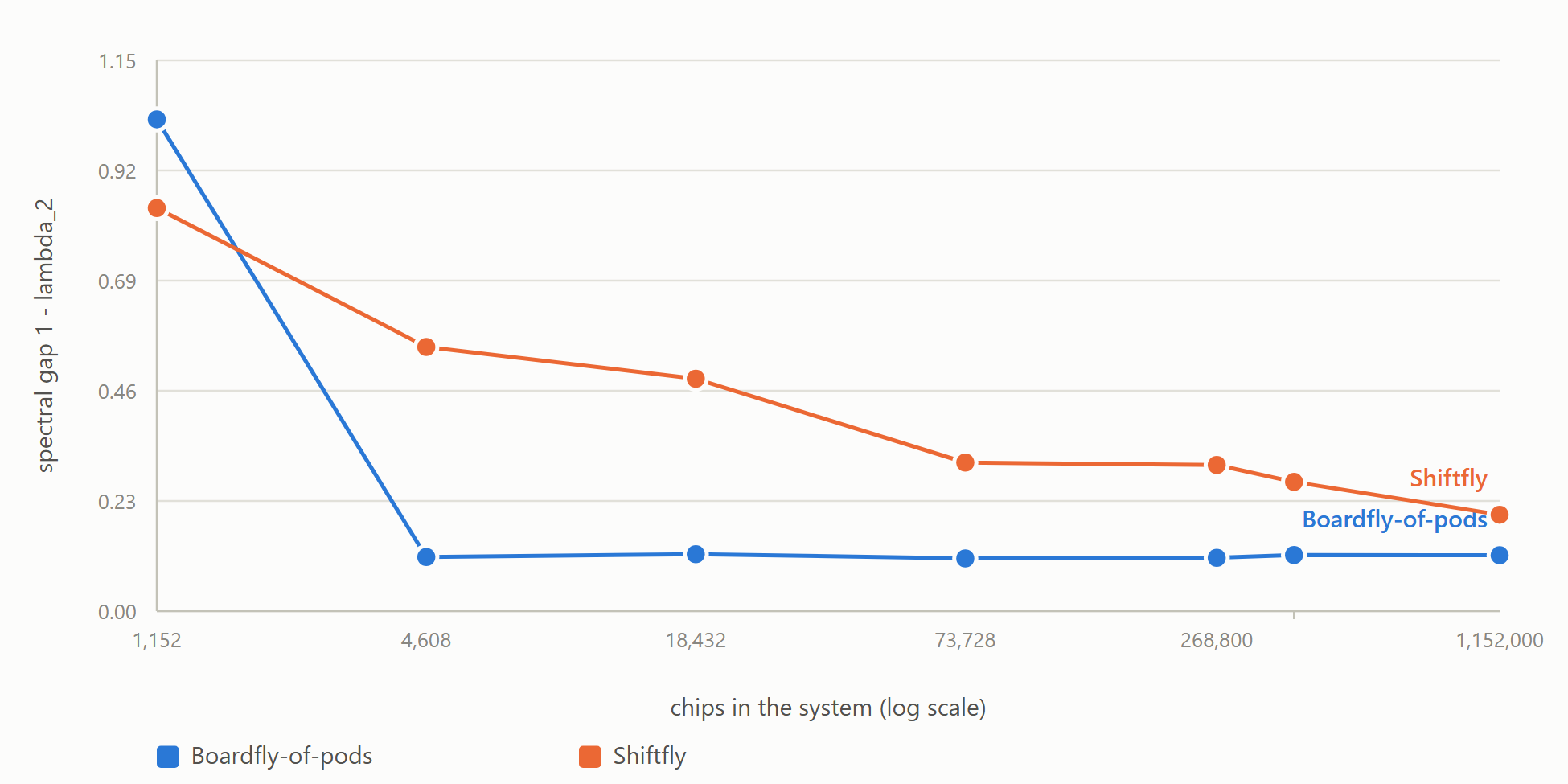}
\Description{Spectral gap against system size on a logarithmic axis. Both
curves fall as the machine grows, but Boardfly's collapses sharply once it
exceeds a single pod and then stays near zero, while Shiftfly's declines
gradually and remains roughly 2.7 times higher across the tested range.}
\caption{Spectral gap of the group graph. A hierarchy of near-cliques joined by
few links expands poorly whatever its diameter.}
\label{fig:gap}
\end{figure}

Shiftfly's spectral gap is roughly $2.7\times$ Boardfly's across the tested
range (Figure~\ref{fig:gap}). In hindsight this is unsurprising and it is worth
stating why, because it cuts against the usual intuition about de~Bruijn-family
graphs. Those families are known to have poor bisection relative to
\emph{expanders}~\cite{bermond1989debruijn}. Boardfly beyond one pod is not an
expander: it is a set of near-cliques joined by a thin random tier, which is
close to the worst case for spectral expansion. The comparison therefore favours
Shiftfly here, and would not against a flat high-radix fabric.

\subsection{Sharing: a negative result}
\label{sec:sharing}

\begin{figure}[t]
\centering
\includegraphics[width=\columnwidth]{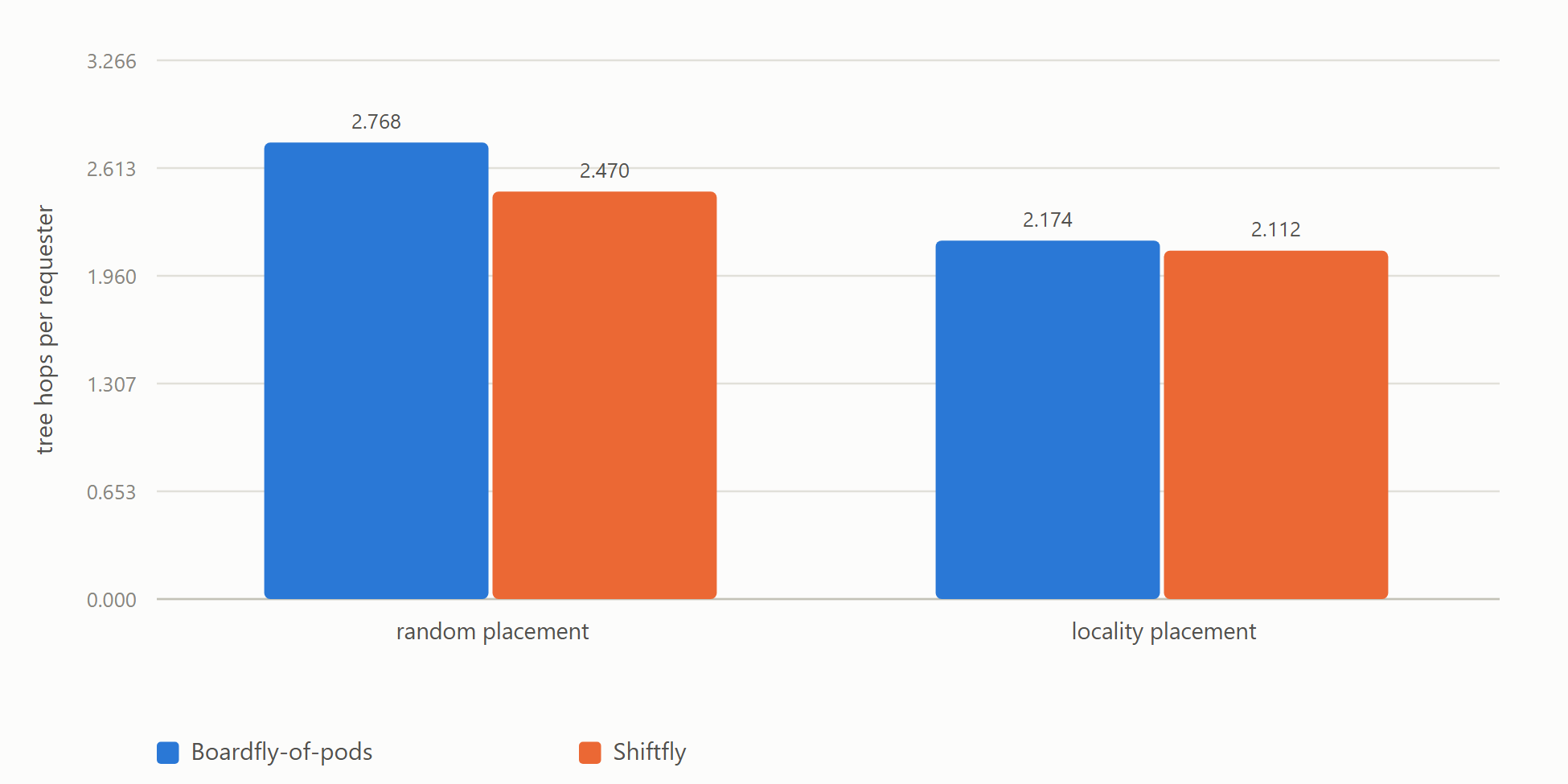}
\Description{Four bars in two groups showing tree hops per requester. Under
random placement Boardfly costs 2.768 and Shiftfly 2.470. Under locality-aware
placement both fall sharply, to 2.174 and 2.112 respectively. The drop from
changing placement is far larger than the remaining gap between the two
topologies.}
\caption{Cost of serving one shared item to $32$ agents, in tree hops per
requester. Placement does most of the work, on both topologies.}
\label{fig:sharing}
\end{figure}

Agentic inference fans one query into many concurrent rollouts that share a
prompt prefix, tool definitions and retrieved context, and therefore request the
same content~\cite{zheng2024sglang,qin2025mooncake,wang2023selfconsistency}.
Theorem~\ref{thm:merge} says a shift-routed fabric collapses such requests
structurally. We set out to quantify that advantage. It is small.

\begin{definition}
For home vertex $h$ and requester set $S$, the \emph{unicast cost} is
$U=\sum_{v\in S}\dist(v,h)$ and the \emph{tree cost} $T$ is the number of edges
in the union of shortest paths from $S$ to $h$---what a fabric with in-network
replication pays. The \emph{efficiency} is $\eta=U/T$.
\end{definition}

\begin{table}[t]
\centering\small
\begin{tabular}{lrrrr}
\toprule
& \multicolumn{2}{c}{$T$ per requester} & \multicolumn{2}{c}{merge depth}\\
\cmidrule(lr){2-3}\cmidrule(lr){4-5}
Placement & Boardfly & Shiftfly & Boardfly & Shiftfly\\
\midrule
random & 2.768 & 2.470 & 2.868 & 2.772 \\
locality-aware & 2.174 & \textbf{2.112} & 2.336 & \textbf{2.152} \\
\midrule
saving & $-21.5\%$ & $-14.5\%$ & & \\
\bottomrule
\end{tabular}
\caption{Shared-content cost. Lower is better. Placement dominates.}
\label{tab:sharing}
\end{table}

Table~\ref{tab:sharing} and Figure~\ref{fig:sharing} report a workload of $32$
agents per query over $64$ affinity clusters with $85\%$ scheduler locality and
Zipf-distributed item popularity, on $8{,}400$ groups. Locality-aware placement
removes $21.5\%$ of tree cost on Boardfly and $14.5\%$ on Shiftfly.
\textbf{Shiftfly's residual advantage at matched placement is $2.9\%$}, with
merge depth $7.9\%$ lower. The great majority of the achievable saving on shared
content therefore comes from \emph{placement}, which is topology-agnostic and
equally available to Boardfly.

\begin{remark}[A metric inversion]
\label{rem:trap}
$\eta$ must not be optimised alone. It is a ratio, so a topology is rewarded for
placing requesters \emph{far} from the home vertex provided their paths overlap;
a network in which everything is one hop away scores the worst possible $\eta=1$
while paying the least. In our measurements $\eta$ ranks Boardfly \emph{higher}
($1.431$ against $1.364$) while absolute tree cost ranks it \emph{lower}
($2.174$ against $2.112$). Had we reported $\eta$ alone---the natural reading of
a sharing objective---we would have drawn the opposite conclusion.
\end{remark}

%======================================================================
\section{Deployment and operations}
\label{sec:ops}

A topology that is better on paper and worse to run is not better. Shiftfly is
proposed as a drop-in replacement for one tier of a machine that already exists,
so the bar is concrete: servicing, allocating and growing it must be at least as
convenient as Boardfly. This section measures four such questions on a
$12{,}500$-group ($400{,}000$-chip) configuration. Three of the four favour
Shiftfly or tie; one does not, and we give it its own subsection.

\subsection{Replacing a failed group}

In both designs a replacement unit \emph{inherits the identity} of the unit it
replaces, and the optical circuit switch re-points that identity's fibres at the
new hardware. The cost is therefore just the degree, and the same $40$-port
budget binds both: $40$ circuits re-established, $40$ peers disturbed, in each
case. Neither design is exactly regular---Shiftfly loses a port on the thin set
of vertices carrying a bidirectional pair (Proposition~\ref{prop:ports}),
Boardfly wherever an inter-pod stub fails to pair---but the mean costs agree to
within a few percent.

\textbf{Field replacement is therefore a tie, by construction rather than by
luck.} Nothing about a shift-routed tier makes a chip, tray or group harder to
pull. What differs is that Shiftfly need not \emph{look anything up}: the
neighbour set of a label is a shift of that label, so the technician's work
order and the control plane's reconciliation are the same in both designs while
the state behind them is not.

\subsection{Control-plane state}

\begin{table}[t]
\centering\small
\begin{tabular}{lr}
\toprule
Design & State for the global wiring \\
\midrule
Boardfly & $550{,}000$ bits \\
Shiftfly & $28$ bits \\
\bottomrule
\end{tabular}
\caption{State the control plane must hold to derive which groups are wired to
which, at $12{,}500$ groups.}
\label{tab:state}
\end{table}

Boardfly's intra-pod tier is derivable---a complete graph needs only pod
membership---but its inter-pod tier has no closed form and must be tabulated.
Shiftfly's entire global tier follows from the pair $(n,d)$, because
Theorem~\ref{thm:shift} makes adjacency arithmetic. Table~\ref{tab:state} gives
the resulting difference of four orders of magnitude.

We do not oversell this. Boardfly could adopt a \emph{structured} inter-pod tier
and recover most of the gap---but a structured global tier over an OCS is
essentially the proposal of this paper.

\subsection{Slice allocation: where Shiftfly is worse}
\label{sec:slices}

\begin{table}[t]
\centering\small
\begin{tabular}{rrccc}
\toprule
Slice & Chips & Boardfly & SF naive & SF instantiated \\
\midrule
16 & 512 & \textbf{1} & disconn. & \textbf{1} \\
36 & 1{,}152 & \textbf{1} & disconn. & 2 \\
128 & 4{,}096 & 3 & disconn. & \textbf{2} \\
512 & 16{,}384 & 4 & disconn. & \textbf{3} \\
2{,}048 & 65{,}536 & 4 & disconn. & \textbf{3} \\
\bottomrule
\end{tabular}
\caption{Group-level diameter of an allocated slice. An arbitrary induced subset
of a shift graph is disconnected at every size tested.}
\label{tab:slices}
\end{table}

Google allocates slices of a pod to jobs, and slices must be isolated and
internally well connected. Here Boardfly has a structural advantage that
Shiftfly cannot match directly: \emph{any} subset of a complete tier is itself
complete, so a slice of up to $36$ groups is carved out at zero cost and with
diameter $1$, requiring no reconfiguration at all.

The corresponding naive operation on Shiftfly---take an arbitrary subset and use
whichever links fall inside it---fails badly. Table~\ref{tab:slices} shows the
induced subgraph is \textbf{disconnected at every size we tested}. Slices cannot
be carved out of a shift graph.

They can, however, be \emph{instantiated}. Because the global tier is a
permutation installed on an optical circuit switch, and because the Imase--Itoh
construction exists at every order, a slice of $m$ groups can be given its own
correctly sized shift permutation. Every slice is then a fabric in its own right
rather than a fragment of a larger one, with the diameter guarantee
$\lceil\log_d m\rceil$ intact. On those terms Shiftfly slices are \emph{better}
than Boardfly's from $128$ groups ($4{,}096$ chips) upward, where Boardfly's own
slice has had to become hierarchical.

The cost is one OCS reconfiguration per allocation, which Boardfly avoids within
a single pod. We regard this as acceptable, but only because it matches how the
machine is already operated: MEMS reconfiguration takes milliseconds to seconds
and happens at job-scheduling
time~\cite{poutievski2022jupiter,jouppi2023tpuv4}, against job lifetimes of
minutes to days, and any slice spanning more than one pod obliges Boardfly to
establish inter-pod circuits anyway. A deployment that allocates many small,
short-lived slices inside one pod should keep Boardfly.

\subsection{Growth and software}

Boardfly is defined at multiples of a pod; crossing $36$ groups introduces a
hierarchy level rather than widening an existing one. Shiftfly is defined at
\emph{every} order, so a partially populated machine of $n$ groups is a valid
fabric with the diameter its size warrants, and growth re-installs a
permutation rather than restructuring a tier.

The software change is bounded by construction, because the building block and
the group are untouched. Every intra-group collective---including the
latency-critical small reductions the collectives engine
accelerates---sees an unchanged fabric. What changes is the inter-group
collective schedule, which is topology-specific in any case, and the routing
layer, which gets \emph{simpler}: there is no forwarding table to size, no
protocol to converge, and no state to reconcile after a reconfiguration.

%======================================================================
\section{Discussion and limitations}
\label{sec:limits}

\paragraph{Boardfly is right at pod scale.} At $1{,}152$ chips a complete global
tier is simply the correct answer, and our own measurements say so. Nothing here
argues that TPU~8i should have been built differently.

\paragraph{Slices must be instantiated, not carved.} An arbitrary induced subset
of a shift graph is disconnected (\S\ref{sec:slices}). This is the one
operational regression, it is real, and a fleet dominated by small short-lived
in-pod slices should not adopt Shiftfly.

\paragraph{Bisection.} De~Bruijn and Kautz families have asymptotically smaller
bisection than a complete or expander tier~\cite{bermond1989debruijn}. We win
against a \emph{hierarchy} because hierarchies expand badly; against a flat
high-radix fabric we would not. All-to-all-dominated training is the adversarial
workload, and it is exactly the workload TPU~8t retains a torus and a flat
scale-out fabric for.

\paragraph{Cabling.} Shift edges are long and irregular, which is the standard
reason de~Bruijn networks are not built~\cite{bermond1989debruijn}. Our
mitigation is narrow and we do not generalise it: the global tier is
\emph{already} optically circuit-switched, so the wiring is a permutation the
switch installs rather than cable anyone routes. This argument is available to
an operator that owns an OCS layer~\cite{poutievski2022jupiter} and to
essentially nobody else.

\paragraph{Deterministic routing.} Theorem~\ref{thm:shift} yields a single path.
Load balance will require non-minimal
alternatives~\cite{valiant1982,jiang2009ugal}, and randomising the route partly
destroys the merge property of Theorem~\ref{thm:merge}. We have not quantified
that interaction, and it is the most important piece of missing work.

\paragraph{The switch is mechanical.} MEMS-based optical circuit switches
reconfigure on millisecond-to-second
timescales~\cite{poutievski2022jupiter,jouppi2023tpuv4}, so the Kautz
permutation is fixed at scheduling time. Label assignment is a scheduling
decision, not a runtime one.

\paragraph{The workload model is synthetic.} Cluster count, agent fan-out,
scheduler locality and Zipf exponent are plausible rather than measured; the
conclusions of \S\ref{sec:sharing} are only as good as those four numbers.

%======================================================================
\section{Related work}

Dragonfly~\cite{kim2008dragonfly} established the high-radix hierarchical form
that Boardfly follows, and HyperX~\cite{ahn2009hyperx} the flattened
alternative. The degree--diameter literature supplies our
ceiling~\cite{hoffman1960,miller2013degdiam}, and a line of recent topologies
pursues it directly: Slim~Fly from McKay--Miller--\v{S}ir\'a\v{n}
graphs~\cite{besta2014slimfly}, PolarFly from Erd\H{o}s--R\'enyi polarity
graphs~\cite{lakhotia2022polarfly}, and Bundlefly~\cite{lei2020bundlefly}. These
operate at \emph{router} radix, where a switch has tens of ports; Shiftfly
operates at accelerator radix, where the endpoint is the router and the port
count is single-digit per chip.

Kautz digraphs are due to Kautz~\cite{kautz1968}; arbitrary-order
generalizations to Reddy, Pradhan and Kuhl~\cite{reddy1980} and Imase and
Itoh~\cite{imase1981,imase1983}. The same shift algebra underlies the Koorde
distributed hash table~\cite{kaashoek2003koorde}, which is where we take the
content-addressing observation from; our contribution is to place it in the
physical topology rather than in an overlay. Google's optical circuit switching
infrastructure is described in Jupiter Evolving~\cite{poutievski2022jupiter} and
its use for TPU reconfiguration in TPU~v4~\cite{jouppi2023tpuv4}. Redundancy in
LLM serving is exploited above the network by prefix
caching~\cite{zheng2024sglang} and disaggregated KV
stores~\cite{qin2025mooncake}; we ask whether it can be exploited in the fabric,
and largely conclude that it is better handled by placement.

%======================================================================
\section{Conclusion}

Google's move from torus to Boardfly traded a $\Theta(N^{1/3})$ diameter for a
constant one, and was clearly correct for a $1{,}152$-chip inference pod. The
mechanism it used---full connectivity at every tier---does not extend, because
complete graphs need $N-1$ ports. Replacing only the global tier with a
generalized Kautz digraph keeps the diameter logarithmic in the group count at a
fixed port budget, yields table-free routing and native content addressing, and
measurably improves expansion, at the cost of being worse inside a single pod.

The redundancy argument that motivated the design did not survive its own
evaluation: most of the available saving on shared content comes from placement,
not from topology, and the natural metric for that objective inverts the
ranking. We regard reporting this as more useful than the alternative.

% Balances the final page's columns.  It emits "You have called \balance in
% second column" -- harmless, and it appears with or without this call, since
% acmart invokes balancing itself.  Removing it drops references from the
% final column, so it stays.
\balance
\bibliographystyle{ACM-Reference-Format}
\bibliography{shiftfly}

\end{document}